\tikzset{
	place/.style={
		circle,
		thick,
		draw=blue!75,
		fill=blue!20,
		minimum size=6mm,
	},
	transitionH/.style={
		rectangle,
		thick,
		fill=black,
		minimum width=8mm,
		inner ysep=2pt
	},
	transitionV/.style={
		rectangle,
		thick,
		fill=black,
		minimum height=8mm,
		inner xsep=2pt
	}
}
\newcommand{\dng}[1]{{{\sim}#1}}
\newcommand{\dg}[1]{\text{dg}({#1})}
\newcommand{\pdg}[1]{\text{dg}^+({#1})}
\newcommand{\ig}[1]{\text{ig}({#1})}
\newcommand{\atom}[1]{\text{atom}({#1})}
\newcommand{\head}[1]{\text{h}({#1})}
\newcommand{\bodyf}[1]{\text{bf}({#1})}
\newcommand{\pbody}[1]{\text{b}^+({#1})}
\newcommand{\nbody}[1]{\text{b}^-({#1})}
\newcommand{\cf}[1]{\text{cf}({#1})}
\newcommand{\cfl}[1]{\overleftarrow{\text{cf}}({#1})}
\newcommand{\var}[1]{\text{var}_{{#1}}}
\newcommand{\lfp}[1]{\text{lfp}({#1})}
\newcommand{\tgsp}[1]{\text{tg}_{sp}({#1})}
\newcommand{\tgst}[1]{\text{tg}_{st}({#1})}
\newcommand{\rhs}[1]{\text{rhs}({#1})}
\newcommand{\stg}[1]{\text{sstg}({#1})}
\newcommand{\atg}[1]{\text{astg}({#1})}
\newtheorem{theorem}{Theorem}
\newtheorem{lemma}[theorem]{Lemma}
\newtheorem{proposition}[theorem]{Proposition}
\newtheorem{definition}{Definition}
\newtheorem{example}{Example}
\providecommand{\keywords}[1]
{
	\small	
	\textbf{\textit{Keywords:}} #1
}
\title{Graphical Conditions for the Existence, Unicity and Number of Regular Models}
\author{Van-Giang Trinh \qquad\qquad Belaid Benhamou
\institute{LIRICA team, LIS, Aix-Marseille University, Marseille, France}
\email{\quad van-giang.trinh@inria.fr \quad\qquad belaid.benhamou@lis-lab.fr}
\and
Sylvain Soliman \qquad\qquad Fran\c{c}ois Fages
\institute{Inria Saclay, EP Lifeware, Palaiseau, France}
\email{\quad Sylvain.Soliman@inria.fr \quad\qquad francois.fages@inria.fr}
}
\begin{document}
\maketitle

\begin{abstract}
The regular models of a normal logic program are a particular type of partial (i.e.~3-valued) models which correspond to stable partial models with minimal undefinedness.
In this paper, we explore graphical conditions on the dependency graph of a finite ground normal logic program to analyze the existence, unicity and number of regular models for the program.
We show three main results: 1) a necessary condition for the existence of non-trivial (i.e.~non-2-valued) regular models, 2) a sufficient condition for the unicity of regular models, and 3) two upper bounds for the number of regular models based on positive feedback vertex sets.
The first two conditions generalize the finite cases of the two existing results obtained by You and Yuan (1994)
for normal logic programs with well-founded stratification.
The third result is also new to the best of our knowledge.
Key to our proofs is a connection that we establish between finite ground normal logic programs and Boolean network theory.

\keywords{logic programming, semantics of negation, canonical model, three-valued model, Datalog, abstract argumentation, Boolean network, feedback vertex set, model counting}
\end{abstract}

\section{Introduction}

Relating graphical representations of a \emph{normal logic program} (or just \emph{program} if not otherwise said) and its model-theoretic semantics is an interesting research direction in theory that also has many useful applications in practice~\cite{cois1994consistency,costantini2006existence,DBLP:conf/ijcai/Linke01}.
Historically, the first studies of this direction focused on the existence of a unique stable model in classes of programs with special graphical properties on (positive) dependency graphs, including positive programs~\cite{gelfond1988stable}, acyclic programs~\cite{apt1991acyclic}, and locally stratified programs~\cite{gelfond1988stable}.
In 1991, Fages gave a simple characterization of stable models as well-supported models in~\cite{DBLP:journals/ngc/Fages91},
and then showed that for \emph{tight} programs (i.e.~without non-well-founded positive justifications),
the stable models of the program coincide with the Herbrand models of its Clark's completion~\cite{cois1994consistency}.
Being finer-represented but more computationally expensive than dependency graphs, several other graphical representations (e.g., cycle and extended dependency graphs, rule graphs, block graphs) were introduced and several improved results were obtained~\cite{costantini2006existence,DBLP:conf/gkr/CostantiniP11,DBLP:journals/tcs/DimopoulosT96,DBLP:conf/ijcai/Linke01}.
There are some recent studies on dependency graphs~\cite{DBLP:journals/tplp/FandinnoL23,TB24-static-analysis}, but they still focus only on stable models.
In contrast, very few studies were made about regular models despite of their prominent importance in argumentation frameworks~\cite{DBLP:journals/sLogica/WuCG09,DBLP:journals/jair/CaminadaS17}
and program semantics~\cite{DBLP:journals/tocl/JanhunenNSSY06}.
The work of~\cite{DBLP:journals/amai/EiterLS97} showed the unicity of regular and stable models in locally stratified programs.
The work of~\cite{DBLP:journals/jcss/YouY94} showed two sufficient graphical conditions, one for the coincidence between stable and regular models, and another one for the unicity of regular models.
However, these two conditions were only proven in the case of well-founded stratification programs, and the question if they are still valid for any program is still open to date.

The stable partial semantics is the 3-valued generalization of the (2-valued) stable model semantics~\cite{DBLP:journals/fuin/Przymusinski90}.
The regular model semantics not only inherits the advantages of the stable partial model semantics but also imposes two notable principles in non-monotonic reasoning: \emph{minimal undefinedness} and \emph{justifiability} (which is closely related to the concept of labeling-based justification in Doyle's truth maintenance system~\cite{DBLP:journals/ai/Doyle79}), making it become one of the well-known semantics in logic programming~\cite{DBLP:journals/jcss/YouY94,DBLP:journals/tocl/JanhunenNSSY06}.
Furthermore, regular models in ground programs were proven to correspond to preferred extensions in Dung's frameworks~\cite{DBLP:journals/sLogica/WuCG09} and assumption-based argumentation~\cite{DBLP:journals/jair/CaminadaS17}, which are two central focuses in abstract argumentation~\cite{DBLP:journals/ker/BaroniCG11}.

Recently, we have proposed a new semantics for finite ground programs, called the \emph{trap space semantics}, which establishes formal links between the model-theoretic and dynamical semantics of a finite ground program~\cite{TBFS24lpnmr-semantics}.
It is built on two newly proposed concepts: \emph{stable} and \emph{supported trap spaces}, which are inspired by the concepts of \emph{trap} (or its duality, \emph{siphon}) in Petri net theory and \emph{trap space} in Boolean network theory~\cite{Murata1989,DBLP:journals/nc/KlarnerBS15,TRINH2023114073,DBLP:journals/corr/abs-2403-06255}.
We relate the new semantics to other widely-known semantics, in particular showing that subset-minimal stable trap spaces of a finite ground program coincide with its regular models.
Interestingly, the restriction to finite ground programs applies without loss of generality
to normal Datalog programs, i.e.~normal logic programs built over an alphabet without function symbols,
since their Herbrand base and their ground instanciation are finite~\cite{ceri1990logic}.

Motivated by the above elements, in this paper, we explore graphical conditions on the dependency graph of a finite ground program to analyze the existence of non-trivial (i.e.~not 2-valued) regular models and the unicity and multiplicity of regular models for the program.
More specifically, we show three main results: 1) the existence of negative cycles is a necessary condition for the existence of non-trivial regular models, 2) the absence of positive cycles is a sufficient condition for the unicity of regular models, and 3) \(3^{|U^{+}|}\) (resp.\ \(2^{|U^{+}|}\)) is an upper bound (resp.\ a finer upper bound) for the number of regular models in generic (resp.\ tight) finite ground programs where \(U^{+}\) is a positive feedback vertex set of the dependency graph.
The first two conditions generalize the finite cases of the two existing results obtained by~\cite{DBLP:journals/jcss/YouY94} for well-founded stratification normal logic programs.
The third result is also new to the best of our knowledge.
Key to our proofs is a connection that we establish between finite ground programs and Boolean network theory based on the trap space semantics.

Boolean Networks (BNs) are a simple and efficient mathematical formalism that has been widely applied to many areas from science to engineering~\cite{schwab2020concepts}.
Originated in the early work of~\cite{thomas1990biological}, studying relationships between the dynamics of a BN and its influence graph has a rich history of research~\cite{DBLP:journals/entcs/PauleveR12,richard2019positive}.
To date, this research direction is still growing with many prominent and deep results~\cite{richard2019positive,schwab2020concepts,richard2023attractor}.
Hence, the established connection can bring a plenty of existing results in BNs to studying finite ground programs as well as provide a unified framework for exploring and proving more new theoretical results in the logic program theory.

The rest of this paper is organized as follows.
In the next section, we recall preliminaries on normal logic programs, regular models, BNs, and related concepts.
Section~\ref{sec:connection} presents the connection that we establish between finite ground programs and BNs.
In Section~\ref{sec:main-results}, we present the main results on relationships between regular models and graphical conditions.
Finally, Section~\ref{sec:conclusion} concludes the paper with some perspectives for future work.

\section{Preliminaries}\label{sec:preliminaries}

We assume that the reader is familiar with the logic program theory and the stable model semantics~\cite{gelfond1988stable}.
Unless specifically stated, a \emph{program} means a normal logic program.
In addition, we consider the Boolean domain \(\mathbb{B} = \{\text{true}, \text{false}\} = \{1, 0\}\),
and the Boolean connectives used in this paper include \(\land\) (conjunction), \(\lor\) (disjunction), \(\neg\) (negation), \(\leftarrow\)
(implication), and \(\leftrightarrow\) (equivalence).

\subsection{Normal logic programs}

We consider a first-order language built over an infinite alphabet of variables,
and finite alphabets of constant, function and predicates symbol. 
The set of first-order \emph{terms} is the least set containing variables, constants and closed by application of function symbols.
An \emph{atom} is a formula of the form \(p(t_1, \dots, t_k)\) where \(p\) is a predicate symbol and \(t_i\) are terms.
A \emph{normal logic program} \(P\) is a \emph{finite} set of \emph{rules} of the form
\[p \gets p_1, \dots, p_m, \dng{p_{m + 1}}, \dots, \dng{p_{k}}\] where \(p\) and \(p_i\) are atoms, \(k \geq m \geq 0\),
and \(\sim\) is a symbol for negation. 
A fact is a rule with \(k = 0\). 
We denote by \(\atom{P}\) the set of atoms appearing in \(P\).
For any rule \(r\) of the above form, \(\head{r} = p\) is the \emph{head} of \(r\), \(\pbody{r} = \{p_1, \dots, p_m\}\) is called the \emph{positive body} of \(r\), \(\nbody{r} = \{p_{m + 1}, \dots, p_{k}\}\) is called the \emph{negative body} of \(r\), and \(\bodyf{r} = p_1 \land \dots \land p_m \land \neg p_{m + 1} \land \dots \land \neg p_{k}\) is the \emph{body formula} of \(r\).
If \(\nbody{r} = \emptyset, \forall r \in P\), then \(P\) is called a \emph{positive program}.
If \(\pbody{r} = \emptyset, \forall r \in P\), then \(P\) is called a \emph{quasi-interpretation} program.

A term, an atom or a program is \emph{ground} if it contains no variable.
The \emph{Herbrand base} is the set of ground atoms formed over the alphabet of the program.
It is finite in absence of function symbol, which is the case of \emph{Datalog} programs~\cite{ceri1990logic}.
The \emph{ground instantiation} of a program \(P\) is the set of the ground instances of all rules in \(P\).
In the rest of the paper, we restrict ourselves to \emph{finite ground normal logic programs}.

We shall use the fixpoint semantics of normal logic programs~\cite{dung1989fixpoint} to prove many new results in the next sections.
To be self-contained, we briefly recall the definition of the \emph{least fixpoint} of a normal logic program \(P\) as follows.
Let \(r\) be the rule \(p \leftarrow \dng{p_1}, \dots, \dng{p_k}, q_1, \dots, q_j\) and let \(r_i\) be rules \(q_i \leftarrow \dng{q^1_i}, \dots, \dng{q^{l_i}_i}\) where \(1 \leq i \leq j\) and \(l_i \geq 0\).
Then \(\sigma_{r}(\{r_1, \dots, r_j\})\) is the following rule \[p \leftarrow \dng{p_1}, \dots, \dng{p_k}, \dng{q_1^1}, \dots, \dng{q_1^{l_1}}, \dots, \dng{q_j^1}, \dots, \dng{q_j^{l_j}}.\]
\(\sigma_P\) is the transformation on quasi-interpretation programs: \(\sigma_P(Q) = \{\sigma_r(\{r_1, \dots, r_j\}) | r \in P, r_i \in Q, 1 \leq i \leq j\}\).
Let \(\text{lfp}_i = \sigma_P^i(\emptyset) = \sigma_P(\sigma_P(\dots \sigma_P(\emptyset)))\), then \(\lfp{P} = \bigcup_{i \geq 1}\text{lfp}_i\) is the least fixpoint of \(P\).
In the case of finite ground programs, \(\lfp{P}\) is finite and also a quasi-interpretation finite ground program~\cite{dung1989fixpoint}.

\subsubsection{Stable and supported partial models}

A \emph{3-valued interpretation} \(I\) of a finite ground program \(P\) is a total function \(I \colon \atom{P} \rightarrow \{\textbf{t}, \textbf{f}, \textbf{u}\}\) that assigns one of the truth values true (\textbf{t}), false (\textbf{f}) or unknown (\textbf{u}), to each atom of \(P\).
If \(I(a) \neq \textbf{u}, \forall a \in \atom{P}\), then \(I\) is an \emph{Herbrand (2-valued) interpretation} of \(P\).
Usually, a 2-valued interpretation is written as the set of atoms that are true in this interpretation.
A 3-valued interpretation \(I\) characterizes  the set of 2-valued interpretations denoted by \(\gamma(I)\) as \(\gamma(I) = \{J | J \in 2^{\text{atom}(P)}, \forall a \in \atom{P}, I(a) \neq \textbf{u} \Rightarrow J(a) = I(a)\}\).
For example, if \(I = \{p = \textbf{t}, q = \textbf{f}, r = \textbf{u}\}\), then \(\gamma(I) = \{\{p\}, \{p, r\}\}\).

We consider two orders on 3-valued interpretations.
The truth order \(\leq_{t}\) is given by \(\textbf{f} <_{t} \textbf{u} <_{t} \textbf{t}\).
Then, \(I_1 \leq_t I_2\) iff \(I_1(a) \leq_{t} I_2(a), \forall a \in \atom{P}\).
The subset order \(\leq_{s}\) is given by \(\textbf{f} <_{s} \textbf{u}\) and \(\textbf{t} <_{s} \textbf{u}\).
Then, \(I_1 \leq_{s} I_2\) iff \(I_1(a) \leq_{s} I_2(a), \forall a \in \atom{P}\). In addition, \(I_1 \leq_{s} I_2\) iff \(\gamma(I_1) \subseteq \gamma(I_2)\), i.e., \(\leq_{s}\) is identical to the subset partial order.

Let \(f\) be a propositional formula on \(\atom{P}\).
Then the valuation of \(f\) under a 3-valued interpretation \(I\) (denoted by \(I(f)\)) is defined recursively as follows:
\begin{align*}
I(f) = \begin{cases}
I(a) &\text{if } f = a, a \in \atom{P}\\
\neg I(f_1) &\text{if } f = \neg f_1\\
\text{min}_{\leq_t}(I(f_1), I(f_2)) &\text{if } f = f_1 \land f_2\\
\text{max}_{\leq_t}(I(f_1), I(f_2)) &\text{if } f = f_1 \lor f_2
\end{cases}
\end{align*} where \(\neg \textbf{t} = \textbf{f}, \neg \textbf{f} = \textbf{t}, \neg \textbf{u} = \textbf{u}\), and \(\text{min}_{\leq_t}\) (resp.\ \(\text{max}_{\leq_t}\)) is the function to get the minimum (resp.\ maximum) value of two values w.r.t.\ the order \(\leq_t\).
We say 3-valued interpretation \(I\) is a \emph{3-valued model} of a finite ground program \(P\) iff for each rule \(r \in P\), \(I(\bodyf{r}) \leq_{t} I(\head{r})\).

\begin{definition}
	Let \(I\) be a 3-valued interpretation of \(P\).
	We build the \emph{reduct} \(P^I\) as follows.
	\begin{itemize}
		\item Remove any rule \(a \leftarrow a_1, \dots, a_m, \dng{b_1}, \dots, \dng{b_k} \in P\) if \(I(b_i) = \textbf{t}\) for some \(1 \leq i \leq k\).
		\item Afterwards, remove any occurrence of \(\dng{b_i}\) from \(P\) such that \(I(b_i) = \textbf{f}\).
		\item Then, replace any occurrence of \(\dng{b_i}\) left by a special atom \textbf{u} (\(\textbf{u} \not \in \atom{P}\)).
	\end{itemize}
	\(P^I\) is positive and has a unique \(\leq_{t}\)-least 3-valued model.
	See~\cite{DBLP:journals/fuin/Przymusinski90} for the method for computing this \(\leq_{t}\)-least 3-valued model.
	Then \(I\) is a \emph{stable partial model} of \(P\) iff \(I\) is equal to the \(\leq_{t}\)-least 3-valued model of \(P^I\).
	A stable partial model \(I\) is a regular model if it is \(\leq_s\)-minimal.
	A regular model is non-trivial if it is not 2-valued.
\end{definition}

The \emph{Clark's completion} of a finite ground program \(P\) (denoted by \(\cf{P}\)) consists of the following sentences: for each \(p \in \atom{P}\), let \(r_1, \dots , r_k\) be all the rules of \(P\) having the same head \(p\), then \(p \leftrightarrow \bodyf{r_1} \lor \dots \lor \bodyf{r_k}\) is in \(\cf{P}\).
If there is no rule whose head is \(p\), then the equivalence is \(p \leftrightarrow \textbf{f}\).
Let \(\rhs{a}\) denote the right hand side of atom \(a\) in \(\cf{P}\).
A 3-valued interpretation \(I\) is a 3-valued model of \(\cf{P}\) iff for every \(a \in \atom{P}\), \(I(a) = I(\rhs{a})\).
We define a \emph{supported partial model} of \(P\) as a 3-valued model of \(\cf{P}\).
Note that 2-valued stable (resp.\ supported) partial models are stable (resp.\ supported) models.

\subsubsection{Dependency and transition graphs}

The Dependency Graph (DG) of a finite ground program \(P\) (denoted by \(\dg{P}\)) is a signed directed graph \((V, E)\) on the set of signs \(\{\oplus, \ominus\}\) where \(V = \atom{P}\) and \((uv, \oplus) \in E\) (resp.\ \((uv, \ominus) \in E\)) iff there is a rule \(r \in P\) such that \(v = \head{r}\) and \(u \in \pbody{r}\) (resp.\ \(u \in \nbody{r}\)).
An arc \((uv, \oplus)\) is positive, whereas an arc \((uv, \ominus)\) is negative.
Since \(\atom{P}\) is finite, the DG of \(P\) is a finite graph, thus we can apply the finite graph theory.
A cycle of \(\dg{P}\) is positive (resp.\ negative) if it contains an even (resp.\ odd) number of negative arcs.
A positive (resp.\ negative) feedback vertex set is a set of vertices that intersect all positive (resp.\ negative) cycles of \(\dg{P}\).
The positive DG of \(P\) (denoted by \(\pdg{P}\)) is a sub-graph of \(\dg{P}\) that has the same set of vertices but contains only positive arcs.
\(P\) is \emph{locally stratified} if every cycle of \(\dg{P}\) contains no negative arc~\cite{gelfond1988stable}.
\(P\) is \emph{tight} if \(\pdg{P}\) has no cycle~\cite{cois1994consistency}.
\(P\) is \emph{well-founded stratification} if there is a topological order on the set of Strongly Connected Components (SCCs) of \(\dg{P}\) and for every SCC \(B\), there exists SCC \(A \leq B\) and for any SCC \(C\), if \(C \leq A\) then there are only positive arcs from atoms in \(C\) to atoms in \(A\)~\cite{DBLP:journals/jcss/YouY94}.
Herein, \(A \leq B\) iff there is a path from some atom in \(A\) to some atom in \(B\).
In the case of finite ground programs, the above definition of well-founded stratification (which was orginally defined for both finite and infinite ground programs) is equivalent to that a finite ground program is well-founded stratification iff there is a topological order of its dependency graph such that every SCC at the lowest level only contains positive arcs.

The \emph{immediate consequence operator} (or the \emph{\(T_P\) operator}) is defined as a mapping \(T_P \colon 2^{\atom{P}} \to 2^{\atom{P}}\) such that \(T_P(I)(a) = I(\rhs{a})\) where \(I\) is a 2-valued interpretation.
If \(I\) is a 2-valued interpretation, then \(P^I\) is exactly the reduct defined in~\cite{gelfond1988stable} and the unique \(\leq_t\)-least model of \(P^I\) is 2-valued.
The \emph{Gelfond-Lifschitz operator} (or the \emph{\(F_P\) operator}) is defined as a mapping \(F_P \colon 2^{\atom{P}} \to 2^{\atom{P}}\) such that \(F_P(I)\) is the unique \(\leq_t\)-least model of \(P^I\)~\cite{gelfond1988stable}.
The \emph{stable} (resp.\ \emph{supported}) \emph{transition graph} of \(P\) is a directed graph (denoted by \(\text{tg}_{st}(P)\) (resp.\ \(\text{tg}_{sp}(P)\))) on the set of all possible 2-valued interpretations of \(P\) such that \((I, J)\) is an arc of \(\text{tg}_{st}(P)\) (resp.\ \(\text{tg}_{sp}(P)\)) iff \(J = F_P(I)\) (resp.\ \(J = T_P(I)\)).
A \emph{trap domain} of a directed graph is a set of vertices having no out-going arcs.

\subsubsection{Stable and supported trap spaces}

In~\cite{TBFS24lpnmr-semantics}, we introduce a new semantics for finite ground programs, called the \emph{trap space semantics}.
This semantics shall be used in this work as the bridge between finite ground programs and Boolean networks.
To be self-contained, we briefly recall the definition and essential properties of this semantics.

A set \(S\) of 2-valued interpretations of a finite ground program \(P\) is called a \emph{stable trap set} (resp.\ \emph{supported trap set}) of \(P\) if \(\{F_P(I) | I \in S\} \subseteq S\) (resp.\ \(\{T_P(I) | I \in S\} \subseteq S\)).
A 3-valued interpretation \(I\) of a finite ground program \(P\) is called a \emph{stable trap space} (resp.\ \emph{supported trap space}) of \(P\) if \(\gamma(I)\) is a stable (resp.\ supported) trap set of \(P\).
By definition, a stable (resp.\ supported) trap set of \(P\) is a trap domain of \(\tgst{P}\) (resp.\ \(\tgsp{P}\)).
Hence, we can deduce that a 3-valued interpretation \(I\) is a stable (resp.\ supported) trap space of \(P\) if \(\gamma(I)\) is a trap domain of \(\tgst{P}\) (resp.\ \(\tgsp{P}\)).
We also show in~\cite{TBFS24lpnmr-semantics} that \(I\) is a supported trap space of \(P\) iff \(I\) is 3-valued model of \(\cfl{P}\) w.r.t.\ to the order \(\leq_s\) where \(\cfl{P}\) is the \(\leftarrow\) part of the Clark's completion of \(P\), and a stable (resp.\ supported) partial model of \(P\) is also a stable (resp.\ supported) trap space of \(P\).

\begin{example}\label{example:lp}
	Consider finite ground program \(P_1\) (taken from~\cite{DBLP:conf/birthday/InoueS12}) where \(P_1 = \{p \leftarrow \dng{q}; q \leftarrow \dng{p}; r \leftarrow q\}\).
	Herein, we use ';' to separate program rules. 
	Figures~\ref{figure:dependency-stable-supported-transition-graphs}~(a),~(b), and~(c) show the dependency graph, the stable transition graph, and the supported transition graph of \(P_1\), respectively.
	\(P_1\) is tight, but neither locally stratified nor well-founded stratification.
	\(P_1\) has five stable (also supported) trap spaces: \(I_1 = \{p = \textbf{t}, q = \textbf{f}, r = \textbf{u}\}\), \(I_2 = \{p = \textbf{f}, q = \textbf{t}, r = \textbf{u}\}\), \(I_3 = \{p = \textbf{u}, q = \textbf{u}, r = \textbf{u}\}\), \(I_4 = \{p = \textbf{t}, q = \textbf{f}, r = \textbf{f}\}\),  and \(I_5 = \{p = \textbf{f}, q = \textbf{t}, r = \textbf{t}\}\).
	Among them, only \(I_3\), \(I_4\), and \(I_5\) are stable (also supported) partial models of \(P_1\).
	\(P_1\) has two regular models (\(I_4\) and \(I_5\)).
	The least fixpoint of \(P_1\) is \(\lfp{P_1} = \{p \leftarrow \dng{q}; q \leftarrow \dng{p}; r \leftarrow \dng{p}\}\).
\end{example}

\begin{figure}[!ht]
	\centering
	\begin{subfigure}{0.25\textwidth}
		\centering
		\begin{tikzpicture}[node distance=1cm and 1cm, every node/.style={scale=0.8}, line width = 0.5mm]
		\node[circle, draw] (p) [] {$p$};
		\node[circle, draw] (q) [below=of p] {$q$};
		\node[circle, draw] (r) [below=of q] {$r$};
		
		\draw[->] (q) edge [bend right=30] node [midway, above, fill=white] {$\ominus$} (p);
		\draw[->] (p) edge [bend right=30] node [midway, above, fill=white] {$\ominus$} (q);
		\draw[->] (q) edge [] node [midway, above, fill=white] {$\oplus$} (r);
		\end{tikzpicture}
		\caption{}
	\end{subfigure}
	\begin{subfigure}{0.35\textwidth}
		\centering
		\begin{tikzpicture}[node distance=1cm and 1cm, every node/.style={scale=0.8}, line width = 0.5mm]
		\node[] (pr) [] {$\{p, r\}$};
		\node[] (p) [below=of pr] {$\{p\}$};
		\node[] (q) [right=of pr] {$\{q\}$};
		\node[] (qr) [below=of q] {$\{q, r\}$};
		\node[] (e) [below=of p] {$\emptyset$};
		\node[] (pqr) [below=of qr] {$\{p, q, r\}$};
		\node[] (r) [right=of pqr] {$\{r\}$};
		\node[] (pq) [right=of qr] {$\{p, q\}$};
		
		\draw[->] (pr) edge [] (p);
		\draw[->] (q) edge [] (qr);
		
		\draw[->] (pq) edge [] (e);
		\draw[->] (r) edge [] (pqr);
		
		\draw[->] (e) edge [] (pqr);
		\draw[->] (pqr) edge [bend left=20] (e);
		
		\path [] (p) edge [out=0, in=45, loop] (p);
		\path [] (qr) edge [out=0, in=45, loop] (qr);
		\end{tikzpicture}
		\caption{}
	\end{subfigure}
	\begin{subfigure}{0.35\textwidth}  
		\centering
		\begin{tikzpicture}[node distance=1cm and 1cm, every node/.style={scale=0.8}, line width = 0.5mm]
		\node[] (pr) [] {$\{p, r\}$};
		\node[] (p) [below=of pr] {$\{p\}$};
		\node[] (q) [right=of pr] {$\{q\}$};
		\node[] (qr) [below=of q] {$\{q, r\}$};
		\node[] (e) [below=of p] {$\emptyset$};
		\node[] (pqr) [below=of qr] {$\{p, q, r\}$};
		\node[] (r) [right=of pqr] {$\{r\}$};
		\node[] (pq) [right=of qr] {$\{p, q\}$};
		
		\draw[->] (pr) edge [] (p);
		\draw[->] (q) edge [] (qr);
		\draw[->] (e) edge [] (pq);
		\draw[->] (pqr) edge [] (r);
		\draw[->] (pq) edge [bend left=15] (r);
		\draw[->] (r) edge [bend left=15] (pq);
		
		\path [] (p) edge [out=0, in=45, loop] (p);
		\path [] (qr) edge [out=0, in=45, loop] (qr);
		\end{tikzpicture}
		\caption{}
	\end{subfigure}
	\caption{(a) \(\dg{P_1}\), (b) \(\tgst{P_1}\), and (c) \(\tgsp{P_1}\).}\label{figure:dependency-stable-supported-transition-graphs}
\end{figure}

\subsection{Boolean networks}

A Boolean Network (BN) \(f\) is a \emph{finite} set of Boolean functions on a finite set of Boolean variables denoted by \(\var{f}\).
Each variable \(v\) is associated with a Boolean function \(f_v \colon \mathbb{B}^{|\var{f}|} \rightarrow \mathbb{B}\).
\(f_v\) is called \emph{constant} if it is always either 0 or 1 regardless of its arguments.
A state \(s\) of \(f\) is a mapping \(s \colon \var{f} \mapsto \mathbb{B}\) that assigns either 0 (inactive) or 1 (active) to each variable.
We can write \(s_v\) instead of \(s(v)\) for short.

Let \(x\) be a state of \(f\).
We use \(x[v \leftarrow a]\) to denote the state \(y\) so that \(y_v = a\) and \(y_u = x_u, \forall u \in \var{f}, u \neq v\) where \(a \in \mathbb{B}\).
The Influence Graph (IG) of \(f\) (denoted by \(\ig{f} \)) is a signed directed graph \((V, E)\) on the set of signs \(\{\oplus, \ominus\}\) where \(V = \var{f}\), \((uv, \oplus) \in E\) (i.e., \(u\) positively affects the value of \(f_v\)) iff there is a state \(x\) such that \(f_v(x[u \leftarrow 0]) < f_v(x[u \leftarrow 1])\), and \((uv, \ominus) \in E\) (i.e., \(u\) negatively affects the value of \(f_v\)) iff there is a state \(x\) such that \(f_v(x[u \leftarrow 0]) > f_v(x[u \leftarrow 1])\).


At each time step \(t\), variable \(v\) can update its state to \(s'(v) = f_v(s)\), where \(s\) (resp.\ \(s'\)) is the state of \(f\) at time \(t\) (resp.\ \(t + 1\)).
An \emph{update scheme} of a BN refers to how variables update their states over (discrete) time~\cite{schwab2020concepts}.
Various update schemes exist, but the primary types are \emph{synchronous}, where all variables update simultaneously, and \emph{fully asynchronous}, where a single variable is non-deterministically chosen for updating.
By adhering to the update scheme, the BN transitions from one state to another, which may or may not be the same.
This transition is referred to as the \emph{state transition}.
Then the dynamics of the BN is captured by a directed graph referred to as the State Transition Graph (STG).
We use \(\stg{f}\) (resp.\ \(\atg{f}\)) to denote the synchronous (resp.\ asynchronous) STG of \(f\).

A non-empty set of states is a \emph{trap set} if it has no out-going arcs on the STG of \(f\).
An \emph{attractor} is a subset-minimal trap set.
An attractor of size 1 (resp.\ at least 2) is called a fixed point (resp.\ cyclic attractor).
A \emph{sub-space} \(m\) of a BN is a mapping \(m \colon \var{f} \mapsto \mathbb{B} \cup \{\star\}\).
A sub-space \(m\) is equivalent to the set of all states \(s\) such that \(s(v) = m(v), \forall v \in \var{f}, m(v) \neq \star\).
With abuse of notation, we use \(m\) and its equivalent set of states interchangeably.
For example, \(m = \{v_1 = \star, v_2 = 1, v_3 = 1\} = \{011, 111\}\) (for simplicity, we write states as a sequence of values).
If a sub-space is also a trap set, it is a \emph{trap space}.
Unlike trap sets and attractors, trap spaces of a BN are independent of the update scheme~\cite{DBLP:journals/nc/KlarnerBS15}.
Then a trap space \(m\) is minimal iff there is no other trap space \(m'\) such that \(m' \subset m\).
It is easy to derive that a minimal trap space contains at least one attractor of the BN regardless of the update scheme.

\begin{example}\label{example:bn}
	Consider BN \(f_1\) with \(f_p = \neg q, f_q = \neg p, f_r = q\).
	Figures~\ref{figure:influence-sync-async-transition-graphs}~(a), (b), and (c) show the influence graph, the synchronous STG, and the asynchronous STG of \(f_1\).
	Attractor states are highlighted with boxes.
	\(\stg{f_1}\) has two fixed points and one cyclic attractor, whereas \(\atg{f_1}\) has only two fixed points.
	\(f_1\) has five trap spaces: \(m_1 = 10\star\), \(m_2 = 01\star\), \(m_3 = \star\star\star\), \(m_4 = 100\),  and \(m_5 = 011\).
	Among them, \(m_4\) and \(m_5\) are minimal.
\end{example}

\begin{figure}[!ht]
	\centering
	\begin{subfigure}{0.2\textwidth}
		\centering
		\begin{tikzpicture}[node distance=1cm and 1cm, every node/.style={scale=0.8}, line width = 0.5mm]
		\node[circle, draw] (p) [] {$p$};
		\node[circle, draw] (q) [below=of p] {$q$};
		\node[circle, draw] (r) [below=of q] {$r$};
		
		\draw[->] (q) edge [bend right=30] node [midway, above, fill=white] {$\ominus$} (p);
		\draw[->] (p) edge [bend right=30] node [midway, above, fill=white] {$\ominus$} (q);
		\draw[->] (q) edge [] node [midway, above, fill=white] {$\oplus$} (r);
		\end{tikzpicture}
		\caption{}
	\end{subfigure}
	\begin{subfigure}{0.35\textwidth}  
		\centering
		\begin{tikzpicture}[node distance=1cm and 1cm, every node/.style={scale=0.8}, line width = 0.5mm]
		\node[] (pr) [] {101};
		\node[draw] (p) [below=of pr] {100};
		\node[] (q) [right=of pr] {010};
		\node[draw] (qr) [below=of q] {011};
		\node[] (e) [below=of p] {000};
		\node[] (pqr) [below=of qr] {111};
		\node[draw] (r) [right=of pqr] {001};
		\node[draw] (pq) [right=of qr] {110};
		
		\draw[->] (pr) edge [] (p);
		\draw[->] (q) edge [] (qr);
		\draw[->] (e) edge [] (pq);
		\draw[->] (pqr) edge [] (r);
		\draw[->] (pq) edge [bend left=15] (r);
		\draw[->] (r) edge [bend left=15] (pq);
		
		\path [] (p) edge [out=0, in=45, loop] (p);
		\path [] (qr) edge [out=0, in=45, loop] (qr);
		\end{tikzpicture}
		\caption{}
	\end{subfigure}
	\begin{subfigure}{0.4\textwidth}  
		\centering
		\begin{tikzpicture}[node distance=1cm and 1cm, every node/.style={scale=0.8}, line width = 0.5mm]
		\node[] (pr) [] {101};
		\node[draw] (p) [below=of pr] {100};
		\node[] (q) [right=of pr] {010};
		\node[draw] (qr) [below=of q] {011};
		\node[] (e) [below=of p] {000};
		\node[] (pqr) [below=of qr] {111};
		\node[] (r) [right=of pqr] {001};
		\node[] (pq) [right=of qr] {110};
		
		\draw[->] (pr) edge [] (p);
		\draw[->] (q) edge [] (qr);
		
		\draw[->] (e) edge [] (p);
		\draw[->] (e) edge [] (q);
		
		\draw[->] (pqr) edge [] (qr);
		\draw[->] (pqr) edge [bend right=15] (pr);
		
		\draw[->] (pq) edge [bend left=15] (q);
		\draw[->] (pq) edge [bend left=25] (p);
		\draw[->] (pq) edge [bend left=15] (pqr);
		
		\draw[->] (r) edge [bend left=25] (pr);
		\draw[->] (r) edge [bend left=15] (qr);
		\draw[->] (r) edge [bend left=15] (e);
		
		\path [] (p) edge [out=180, in=135, loop] (p);
		\path [] (qr) edge [out=0, in=45, loop] (qr);
		
		\path [] (pr) edge [out=225, in=180, loop] (pr);
		\path [] (q) edge [out=-45, in=0, loop] (q);
		\end{tikzpicture}
		\caption{}
	\end{subfigure}
	\caption{(a) \(\ig{f_1}\), (b) \(\stg{f_1}\), and (c) \(\atg{f_1}\).}\label{figure:influence-sync-async-transition-graphs}
\end{figure}

\section{Finite ground normal logic programs and Boolean networks}\label{sec:connection}

We define a BN encoding for finite ground programs in Definition~\ref{definition:BN-encoding}.
Then, we show two relationships between a finite ground program and its encoded BN (see Theorems~\ref{theorem:ig-dg} and~\ref{theorem:lp-bn-supported-ts}).

\begin{definition}\label{definition:BN-encoding}
	Let \(P\) be a finite ground program. We define a BN \(f\) encoding \(P\) as follows: \(\var{f} = \atom{P}\), \(f_v = \bigvee_{r {\in} P, v = \head{r}}\bodyf{r}, \forall v \in \text{var}_f\).
	Conventionally, if there is no rule \(r \in P\) such that \(\head{r} = v\), then \(f_v = 0\).
	By considering 1 (resp.\ 0) as \textbf{t} (resp.\ \textbf{f}),  and \(\star\) as \(\textbf{u}\), sub-spaces (resp.\ states) of \(f\) are identical to 3-valued (resp.\ 2-valued) interpretations of \(P\).
\end{definition}

\begin{theorem}\label{theorem:ig-dg}
	Let \(P\) be a finite ground program and \(f\) be its encoded BN\@.
	Then \(\ig{f} \subseteq \dg{P}\).
\end{theorem}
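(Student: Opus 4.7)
The plan is a direct unpacking of the two definitions. The vertex sets are identical (\(\var{f} = \atom{P}\) by Definition~\ref{definition:BN-encoding}), so only the signed arc sets need to be compared. The key structural observation is that \(f_v = \bigvee_{r \in P,\, v = \head{r}} \bodyf{r}\) is a disjunction of conjunctions of literals, so any change in \(f_v\) when flipping a single variable can be traced to exactly one disjunct, and the polarity of the flipped variable inside that disjunct is forced by the direction of the change.

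Concretely, I would proceed in two symmetric cases. First, suppose \((uv, \oplus) \in \ig{f}\). By definition there exists a state \(x\) with \(f_v(x[u \leftarrow 0]) = 0\) and \(f_v(x[u \leftarrow 1]) = 1\). Since \(f_v\) is a disjunction, every disjunct vanishes at \(x[u \leftarrow 0]\) while at least one disjunct \(\bodyf{r^*}\) (with \(\head{r^*} = v\)) becomes true at \(x[u \leftarrow 1]\). Because \(\bodyf{r^*}\) is a conjunction of literals over distinct atoms, if \(u\) did not occur in \(\bodyf{r^*}\) its value would be the same in both states, contradiction; and if \(u\) occurred as \(\neg u\) (i.e.\ \(u \in \nbody{r^*}\)), flipping \(u\) from \(0\) to \(1\) could only decrease the value of the conjunction, again a contradiction. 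Hence \(u \in \pbody{r^*}\), which gives \((uv, \oplus) \in \dg{P}\).

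For the second case, suppose \((uv, \ominus) \in \ig{f}\), so some state \(x\) satisfies \(f_v(x[u \leftarrow 0]) = 1\) and \(f_v(x[u \leftarrow 1]) = 0\). Now every disjunct is false at \(x[u \leftarrow 1]\), while at least one disjunct \(\bodyf{r^*}\) with \(\head{r^*} = v\) is true at \(x[u \leftarrow 0]\). The same literal-polarity case analysis on the conjunction \(\bodyf{r^*}\) forces \(u \in \nbody{r^*}\), yielding \((uv, \ominus) \in \dg{P}\).

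Combining both cases gives \(\ig{f} \subseteq \dg{P}\) as signed digraphs on \(\atom{P}\). I do not foresee any real obstacle: the only thing to be careful about is that the same atom \(u\) could in principle appear in the positive body of one rule with head \(v\) and the negative body of another, so the argument must localize the witness to a single disjunct \(\bodyf{r^*}\) before arguing about polarities, which the disjunctive structure of \(f_v\) allows at once.
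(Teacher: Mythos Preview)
Your proof is correct and close in spirit to the paper's, though the organization differs. The paper does not start from an arc of $\ig{f}$ and trace back to a witnessing rule; instead it fixes $u$ and $v$ and case-splits on whether $u$ appears only in positive bodies, only in negative bodies, or in both, among rules with head $v$. In the first two cases $f_v$ is monotone (increasing, respectively decreasing) in $u$, which immediately rules out the wrong sign of influence; the mixed case is trivial since then both $(uv,\oplus)$ and $(uv,\ominus)$ are already in $\dg{P}$. Your approach has the advantage of exhibiting an explicit rule $r^*$ certifying the arc, while the paper's avoids choosing a witness state altogether by appealing to monotonicity. One small remark: your phrase ``conjunction of literals over distinct atoms'' is not literally guaranteed, since the same atom may lie in both $\pbody{r}$ and $\nbody{r}$; this does not harm the argument, however, because such a body formula is identically false and hence can never be the witnessing disjunct.
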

\begin{proof}
	By construction, \(\ig{f}\) and \(\dg{P}\) have the same set of vertices.
	Let \(\text{in}_f^+(v)\) (resp.\ \(\text{in}_P^+(v)\)) denote the set of vertices \(u\) such that \((uv,\oplus)\) is an arc of \(\ig{f}\) (resp.\ \(\dg{P}\)).
	We define \(\text{in}_f^-(v)\) (resp.\ \(\text{in}_P^-(v)\)) similarly.
	We show that \(\text{in}_f^+(v) \subseteq \text{in}_P^+(v)\) and \(\text{in}_f^-(v) \subseteq \text{in}_P^-(v)\) for every \(v \in \atom{P}\) (*).
	Consider atom \(u\).
	The case that both \(u\) and \(\dng{u}\) appear in rules whose heads are \(v\) is trivial.
	For the case that only \(u\) appears in rules whose heads are \(v\), \(u\) is essential in \(f_v\) by construction, and it positively affects the value of \(f_v\), leading to \(u \in \text{in}_f^+(v)\) and \(u \not \in \text{in}_f^-(v)\).
	This implies that (*) still holds.
	The case that only \(\dng{u}\) appears in rules whose heads are \(v\) is similar.
	By (*), we can conclude that \(\ig{f} \subseteq \dg{P}\), i.e., \(\ig{f}\) is a sub-graph of \(\dg{P}\).
	In addition, if \(P\) is a quasi-interpretation finite ground program, then \(\ig{f} = \dg{P}\).
\end{proof}

\begin{lemma}[derived from Theorem 4.5 of~\cite{DBLP:conf/birthday/InoueS12}]\label{lemma:lp-bn-transition-graph}
	Let \(P\) be a finite ground program and \(f\) be its encoded BN\@.
	Then \(\tgsp{P} = \stg{f}\).
\end{lemma}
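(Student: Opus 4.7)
The plan is to unfold the definitions on both sides and observe that they match termwise. Both $\tgsp{P}$ and $\stg{f}$ share the same vertex set: by Definition~\ref{definition:BN-encoding}, states of $f$ are exactly 2-valued interpretations of $P$ via the identification $1 \leftrightarrow \textbf{t}$, $0 \leftrightarrow \textbf{f}$. So I only need to show the arc relations coincide, i.e.\ $J = T_P(I)$ if and only if $J_v = f_v(I)$ for every $v \in \var{f}$.

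The key observation is that the right-hand side formulas agree. By construction of the Clark's completion, for each atom $v \in \atom{P}$ we have $\rhs{v} = \bigvee_{r \in P,\, v = \head{r}} \bodyf{r}$, with the convention $\rhs{v} = \textbf{f}$ when no rule has head $v$. By Definition~\ref{definition:BN-encoding}, this is exactly the formula defining $f_v$ (with the same convention $f_v = 0$ in the headless case). So $\rhs{v}$ and $f_v$ are syntactically the same propositional expression over $\atom{P} = \var{f}$.

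Next, because $I$ is 2-valued, the 3-valued valuation $I(\rhs{v})$ reduces to the ordinary Boolean evaluation of the formula at the Boolean assignment $I$: the clauses for $\land$, $\lor$ and $\neg$ in the recursive definition of $I(f)$ coincide with their classical counterparts when no argument is $\textbf{u}$. Consequently $I(\rhs{v}) = f_v(I)$ for every $v$. Recalling $T_P(I)(v) = I(\rhs{v})$, this yields $T_P(I)(v) = f_v(I)$ for all $v$, which is precisely the synchronous update rule defining arcs of $\stg{f}$. Hence $(I, J)$ is an arc of $\tgsp{P}$ iff $(I, J)$ is an arc of $\stg{f}$, giving $\tgsp{P} = \stg{f}$.

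There is no real obstacle here; the statement is essentially a bookkeeping identity between Clark's completion and the BN encoding. The only point that requires a line of care is the reduction from the 3-valued valuation $I(\cdot)$ to Boolean evaluation on 2-valued $I$, which is immediate from the definition of $I(f)$ since the operators $\min_{\leq_t}$, $\max_{\leq_t}$ and $\neg$ restricted to $\{\textbf{t}, \textbf{f}\}$ are the classical Boolean operations.
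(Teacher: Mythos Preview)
Your argument is correct: the identity $T_P(I)(v)=I(\rhs{v})=f_v(I)$ follows immediately from the fact that $\rhs{v}$ and $f_v$ are the same propositional formula and that the 3-valued valuation collapses to classical Boolean evaluation on 2-valued $I$. The paper does not actually give a proof of this lemma; it simply records it as derived from Theorem~4.5 of Inoue and Sakama, so there is no ``paper's own proof'' to compare against. Your direct unfolding of the definitions is a self-contained alternative to that citation and is entirely adequate.
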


\begin{theorem}\label{theorem:lp-bn-supported-ts}
	Let \(P\) be a finite ground program and \(f\) be its encoded BN\@.
	Then supported trap spaces of \(P\) coincide with trap spaces of \(f\).
\end{theorem}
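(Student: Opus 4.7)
The plan is to reduce the statement to a short chain of equivalences using (i) the identification of sub-spaces of $f$ with 3-valued interpretations of $P$ that is built into Definition~\ref{definition:BN-encoding}, and (ii) the equality $\tgsp{P} = \stg{f}$ given by Lemma~\ref{lemma:lp-bn-transition-graph}. No new combinatorial argument is needed; the work is in checking that the two notions of ``concretization of an interpretation/sub-space to a set of 2-valued states'' coincide.

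First I would make the bridge explicit. Under the encoding of Definition~\ref{definition:BN-encoding}, each atom of $P$ is a variable of $f$, and the conversion $\textbf{t}\leftrightarrow 1$, $\textbf{f}\leftrightarrow 0$, $\textbf{u}\leftrightarrow \star$ turns a 3-valued interpretation $I$ of $P$ into a sub-space $m_I$ of $f$, and vice versa. A direct check from the definition of $\gamma$ and of the ``sub-space as a set of states'' convention shows that the set of 2-valued interpretations in $\gamma(I)$ equals, under the same conversion, the set of states contained in $m_I$. Similarly, the 2-valued interpretations of $P$ correspond bijectively to the states of $f$.

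Second, I would unfold the definitions on each side. By definition, $I$ is a supported trap space of $P$ iff $\gamma(I)$ is a supported trap set of $P$, which (by the recalled equivalence) is iff $\gamma(I)$ is a trap domain of $\tgsp{P}$, i.e.\ has no out-going arcs in $\tgsp{P}$. On the BN side, $m_I$ is a trap space of $f$ iff $m_I$ (viewed as a set of states) is a trap set, i.e.\ has no out-going arcs in $\stg{f}$. Applying Lemma~\ref{lemma:lp-bn-transition-graph} to identify $\tgsp{P}$ with $\stg{f}$, and using the bridge above to identify $\gamma(I)$ with the state set of $m_I$, the two ``no out-going arc'' conditions are literally the same, proving both inclusions at once.

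There is no serious obstacle here: the only point that needs care is the compatibility of the two concretization operations ($\gamma$ on the program side, the sub-space-to-state-set view on the BN side), and this is immediate from the encoding. The theorem is then obtained by a clean substitution of Lemma~\ref{lemma:lp-bn-transition-graph} into the definitional chain.
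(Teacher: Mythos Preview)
Your proposal is correct and follows essentially the same route as the paper: use Lemma~\ref{lemma:lp-bn-transition-graph} to identify $\tgsp{P}$ with $\stg{f}$, then read off the equivalence of the two ``no out-going arc'' conditions. The paper additionally makes explicit the fact (recalled in the preliminaries) that trap spaces of a BN are independent of the update scheme, which is what justifies working with $\stg{f}$ rather than $\atg{f}$ when characterising trap spaces; you implicitly use this when you write ``$m_I$ is a trap space of $f$ iff $m_I$ \ldots\ has no out-going arcs in $\stg{f}$'', so it would be worth citing that point explicitly.
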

\begin{proof}
	By Lemma~\ref{lemma:lp-bn-transition-graph}, \(\tgsp{P} = \stg{f}\).
	Note that trap spaces of \(f\) are the same under both the synchronous and asynchronous update schemes~\cite{DBLP:journals/nc/KlarnerBS15}.
	Hence, trap spaces of \(f\) coincide with trap spaces of \(\stg{f}\).
	Since \(\tgsp{P} = \stg{f}\), supported trap spaces of \(P\) coincide with trap spaces of \(f\).
\end{proof}

For illustration, BN \(f_1\) of Example~\ref{example:bn} is the encoded BN of finite ground program \(P_1\) of Example~\ref{example:lp}.
\(\tgsp{P_1}\) is identical to \(\stg{f_1}\), and the five supported trap spaces of \(P_1\) are identical to the five trap spaces of \(f_1\).
In addition, \(P_1\) is tight and \(\ig{f_1} = \dg{P_1}\).

\section{Graphical analysis results}\label{sec:main-results}

In this section, we present our new results on graphical conditions for several properties of regular models in finite ground normal logic programs by exploiting the connection established in Section~\ref{sec:connection}.

\subsection{Preparations}
For convenience, we first recall several existing results in both logic programs and Boolean networks that shall be used later.

\begin{theorem}[\cite{DBLP:conf/birthday/InoueS12}]\label{theorem:quasi-transition-graph}
	Let \(P\) be a quasi-interpretation finite ground program.
	Then \(\tgst{P} = \tgsp{P}\), i.e., the stable and supported transition graphs of \(P\) are the same.
\end{theorem}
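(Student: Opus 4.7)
The plan is to show that $F_P(I) = T_P(I)$ for every 2-valued interpretation $I$ of $P$; since both $\tgst{P}$ and $\tgsp{P}$ have the same vertex set (all 2-valued interpretations) and the arc from $I$ is entirely determined by the image of $I$ under $F_P$ or $T_P$ respectively, equality of the two operators immediately yields equality of the two graphs.

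To compute $F_P(I)$, I would unfold the reduct construction for the special shape of a quasi-interpretation program, where every rule has the form $p \gets \dng{p_1}, \dots, \dng{p_k}$ with empty positive body. For any 2-valued $I$, the three steps of the reduct construction collapse as follows: a rule is deleted exactly when some $p_i \in I$; otherwise all of its remaining negative literals $\dng{p_j}$ satisfy $I(p_j) = \textbf{f}$ and are removed, turning the rule into the fact $p$ (no \textbf{u} literal ever appears since $I$ is 2-valued). Therefore $P^I$ is a set of facts, and its unique $\leq_t$-least model is exactly
\[
F_P(I) \;=\; \{\,p \;\mid\; \exists r \in P,\ \head{r}=p,\ \nbody{r}\cap I = \emptyset\,\}.
\]

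To compute $T_P(I)$, I would use the definition $T_P(I)(p) = I(\rhs{p})$ together with the observation that, for a quasi-interpretation program, the body formula of any rule $r$ reduces to $\neg p_1 \land \dots \land \neg p_k$, a pure conjunction of negated atoms. Hence $I(\bodyf{r}) = 1$ iff $\nbody{r}\cap I = \emptyset$, and therefore
\[
T_P(I) \;=\; \{\,p \;\mid\; \exists r \in P,\ \head{r}=p,\ \nbody{r}\cap I = \emptyset\,\},
\]
which coincides with $F_P(I)$.

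This is a routine unfolding of the two operators, so there is no real obstacle beyond bookkeeping; the only care needed is to verify that no atom $\textbf{u}$ is introduced in the reduct (guaranteed because $I$ is 2-valued, so the third clause of the reduct definition never triggers) and to observe that the case with no rule having head $p$ is consistent in both operators (both give $p \notin F_P(I)$ and $T_P(I)(p) = I(\textbf{f}) = 0$, in agreement with the convention $\rhs{p} = \textbf{f}$).
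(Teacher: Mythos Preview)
Your argument is correct: for a quasi-interpretation program and a 2-valued \(I\), the reduct \(P^I\) is simply a set of facts, so \(F_P\) and \(T_P\) coincide pointwise, and the two transition graphs are equal. Note, however, that the paper does not supply its own proof of this theorem; it is recalled from~\cite{DBLP:conf/birthday/InoueS12} as a known result, so there is no in-paper proof to compare against---your direct unfolding is the standard justification and matches what one would expect the original argument to be.
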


\begin{theorem}[\cite{DBLP:conf/birthday/InoueS12}]\label{theorem:lfp-transition-graph}
	Let \(P\) be a finite ground program and \(\lfp{P}\) denote its least fixpoint.
	Then \(P\) and \(\lfp{P}\) have the same stable transition graph.
\end{theorem}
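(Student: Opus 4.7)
My plan is to establish the stronger pointwise equality $F_P = F_{\lfp{P}}$ on $2^{\atom{P}}$; the theorem then follows immediately since $\tgst{P}$ and $\tgst{\lfp{P}}$ share a vertex set and their arcs are determined by these operators. Fix any 2-valued interpretation $I$. Because $\lfp{P}$ is a quasi-interpretation program, its reduct $\lfp{P}^I$ collapses to a set of facts, so $F_{\lfp{P}}(I)$ is exactly the set of heads $p$ of rules $p \leftarrow \dng{p_1}, \ldots, \dng{p_m} \in \lfp{P}$ whose negative body $\{p_1, \ldots, p_m\}$ is disjoint from $I$. In contrast, $F_P(I)$ is characterised by finite proof trees in $P^I$: trees whose internal nodes apply rules $r \in P$ with $\nbody{r} \cap I = \emptyset$, whose children prove the positive body atoms of $r$, and whose leaves are facts of $P$. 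The task thus reduces to exhibiting a correspondence between such proof trees of $p$ in $P^I$ and single unfolded rules of $\lfp{P}$ having head $p$ and negative body disjoint from $I$.

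For the direction $F_{\lfp{P}}(I) \subseteq F_P(I)$, I would induct on $i$ such that a given rule of $\lfp{P}$ lies in $\text{lfp}_i$. The base case $i = 1$ gives a fact of $P$, which already constitutes a one-node proof tree in $P^I$ provided its negative body is disjoint from $I$. For $i \geq 2$, the rule has the form $\sigma_r(\{r_1, \ldots, r_j\})$ with $r \in P$ and each $r_k \in \text{lfp}_{i - 1}$ having head equal to the $k$-th positive body atom of $r$; the induction hypothesis supplies proof subtrees for these atoms in $P^I$, and stitching them below $r$ yields a proof tree for the head of the unfolded rule. Since this rule's negative body is $\nbody{r} \cup \bigcup_k \nbody{r_k}$ and is assumed disjoint from $I$, every rule used in the assembled tree survives the reduct w.r.t.\ $I$.

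For the reverse direction $F_P(I) \subseteq F_{\lfp{P}}(I)$, I would induct on the depth of a proof tree of $p$ in $P^I$. The base case (a leaf fact of $P$) already lies in $\text{lfp}_1$. For the inductive step, let the root apply $r: p \leftarrow q_1, \ldots, q_j, \dng{p_1}, \ldots, \dng{p_k}$ with children proving $q_1, \ldots, q_j$; by induction, each child corresponds to some $r_k \in \text{lfp}_{i_k}$ with head $q_k$ and negative body disjoint from $I$, and $\sigma_r(\{r_1, \ldots, r_j\})$ then lies in $\text{lfp}_{1 + \max_k i_k} \subseteq \lfp{P}$ with head $p$ and negative body $\{p_1, \ldots, p_k\} \cup \bigcup_k \nbody{r_k}$, which remains disjoint from $I$. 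The main subtlety is bookkeeping: aligning the recursive structure of the proof tree with the positional arguments of $\sigma_r$ and verifying that disjointness from $I$ is preserved under the various unions of negative bodies; once this is set up, both inductions are routine and the combined equality $F_P = F_{\lfp{P}}$ yields $\tgst{P} = \tgst{\lfp{P}}$.
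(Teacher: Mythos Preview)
The paper does not supply its own proof of this statement: Theorem~\ref{theorem:lfp-transition-graph} is listed in the Preparations subsection as a result of Inoue and Sakama and is cited without argument. There is therefore no in-paper proof to compare your proposal against.

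That said, your argument is sound and is essentially the standard route to the result. The pointwise identity \(F_P = F_{\lfp{P}}\) on \(2^{\atom{P}}\) is exactly what one proves, and the two inductions you outline go through once one notes that \(\sigma_P\) is monotone in its argument, so that \(\text{lfp}_i \subseteq \text{lfp}_{i+1}\) and hence \(\sigma_r(\{r_1,\dots,r_j\}) \in \text{lfp}_{1+\max_k i_k}\) as you claim. Two small points of phrasing: what you call ``a fact of \(P\)'' in the base cases is really a rule of \(P\) with empty \emph{positive} body (it only becomes a fact in \(P^I\) after the reduct drops the satisfied negative literals); and the claim that \(\tgst{P}\) and \(\tgst{\lfp{P}}\) share a vertex set tacitly assumes \(\lfp{P}\) is interpreted over \(\atom{P}\) rather than over its own possibly smaller atom set, which is the intended convention and is harmless since any atom absent from \(\lfp{P}\) is unfounded and hence mapped to false by both operators. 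Neither point affects correctness.
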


\begin{theorem}[Theorem 6 of~\cite{TB24-static-analysis}]\label{theorem:lfp-no-cycle}
	Let \(P\) be a finite ground program and \(\lfp{P}\) denote its least fixpoint.
	If \(P\) is locally stratified, then \(\dg{\lfp{P}}\) has no cycle.
\end{theorem}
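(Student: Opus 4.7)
My plan is to exploit the fact that $\lfp{P}$ is a quasi-interpretation finite ground program, so every rule in $\lfp{P}$ has an empty positive body. Consequently, every arc of $\dg{\lfp{P}}$ is negative, and any cycle in $\dg{\lfp{P}}$ must consist entirely of negative arcs. The strategy is then to show that each such negative arc unfolds back to a short path in $\dg{P}$ of a specific shape, and to glue these unfoldings around a hypothetical cycle to manufacture a cycle in $\dg{P}$ containing a negative arc, contradicting local stratification.

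First I would make the unfolding correspondence precise. By an induction on $i$ on the definition $\text{lfp}_i = \sigma_P^i(\emptyset)$, I would show the following claim: $u$ occurs in the negative body of some rule in $\lfp{P}$ with head $v$ if and only if there exist atoms $w_0 = v, w_1, \dots, w_n$ in $\atom{P}$ (with $n \geq 0$) such that $(w_{j+1}, w_j, \oplus)$ is an arc of $\dg{P}$ for $0 \leq j < n$, and $(u, w_n, \ominus)$ is an arc of $\dg{P}$. The forward direction follows because $\sigma_r(\{r_1, \dots, r_j\})$ appends, to the negative body of $r$, exactly the negative bodies of the $r_i$'s, whose heads lie in $\pbody{r}$; iterating this accounts for the chain of positive arcs. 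The reverse direction follows because a long enough power of $\sigma_P$ realizes any such finite chain.

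Next, suppose for contradiction that $\dg{\lfp{P}}$ contains a cycle $v_1 \to v_2 \to \dots \to v_k \to v_1$. Since every arc of $\dg{\lfp{P}}$ is negative, each arc $(v_i, v_{i+1}, \ominus)$ of $\dg{\lfp{P}}$ (indices mod $k$) corresponds via the claim above to a path in $\dg{P}$ from $v_i$ to $v_{i+1}$ whose first arc (out of $v_i$) is negative and whose remaining arcs (if any) are positive. Concatenating these $k$ paths yields a closed walk in $\dg{P}$ that contains at least $k \geq 1$ negative arcs; extracting a simple cycle from this walk still preserves at least one negative arc on the cycle, because the negative arcs are separated along the walk by purely positive segments and any simple cycle traversed by the walk must contain at least one of these negative arcs. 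This contradicts the assumption that $P$ is locally stratified, i.e.\ that every cycle of $\dg{P}$ uses only positive arcs, and therefore $\dg{\lfp{P}}$ has no cycle.

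The main obstacle I anticipate is the bookkeeping in the unfolding claim (particularly the base case and the induction step for $\sigma_P$) and the final step of extracting a simple cycle from a closed walk while certifying that it carries a negative arc; here it helps to observe that, because positive segments of the walk cannot, by local stratification, themselves close up into a cycle in $\dg{P}$, any cycle obtained from the walk is forced to use at least one of the negative arcs coming from the $v_i$'s.
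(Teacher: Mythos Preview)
The paper does not prove this theorem; it is imported verbatim as Theorem~6 of \cite{TB24-static-analysis} in the ``Preparations'' subsection. So there is no paper proof to compare against, and I assess your argument on its own merits.

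Your strategy is correct and nearly complete. The unfolding claim is right (only the forward direction is actually needed), and the induction on $i$ in $\text{lfp}_i=\sigma_P^i(\emptyset)$ goes through exactly as you outline: a negative body atom in $\sigma_r(\{r_1,\dots,r_j\})$ is either already in $\nbody{r}$ or comes from some $\nbody{r_i}$ with $\head{r_i}\in\pbody{r}$, which supplies the extra positive arc. Concatenating the $k$ unfolded paths around a hypothetical cycle of $\dg{\lfp{P}}$ does produce a closed walk in $\dg{P}$ carrying at least $k\ge 1$ negative arcs.

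The gap is in your last paragraph. The sentence ``positive segments of the walk cannot, by local stratification, themselves close up into a cycle in $\dg{P}$'' inverts the hypothesis: local stratification says every cycle of $\dg{P}$ is \emph{free of} negative arcs, so it \emph{permits} purely positive cycles rather than forbidding them. Nothing you have written rules out that some simple cycle sitting inside your closed walk is entirely positive, and the assertion ``any simple cycle traversed by the walk must contain at least one of these negative arcs'' is unjustified as stated.

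Two clean repairs are available. First, use the standard fact that the arc multiset of any closed walk in a finite directed graph decomposes as a disjoint union of simple cycles; since your walk contains at least one negative arc, some cycle in that decomposition contains a negative arc, contradicting local stratification. Second, and more directly, bypass cycle extraction entirely: local stratification of a finite ground program is equivalent to the existence of a rank $\lambda\colon\atom{P}\to\mathbb{N}$ with $\lambda(u)\le\lambda(v)$ on every positive arc $(u,v,\oplus)$ and $\lambda(u)<\lambda(v)$ on every negative arc $(u,v,\ominus)$; summing the increments of $\lambda$ once around your closed walk then yields a strictly positive total equal to zero. Either route completes your argument.
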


\begin{lemma}
	\label{lemma:no-neg-cycle}
	Let \(P\) be a finite ground program and \(\lfp{P}\) denote its least fixpoint.
	If \(\dg{P}\) is has no negative cycle, then \(\dg{\lfp{P}}\) has no negative cycle.
\end{lemma}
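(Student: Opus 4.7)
My plan is to prove the contrapositive: assuming $\dg{\lfp{P}}$ contains a negative cycle, I will exhibit one in $\dg{P}$. The opening observation is that $\lfp{P}$ is a quasi-interpretation finite ground program, so every body literal is negated and every arc of $\dg{\lfp{P}}$ is negative; a negative cycle of $\dg{\lfp{P}}$ is therefore precisely a cycle of odd length (in number of arcs).

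The core technical step is an induction on $i \geq 1$ over the iterates $\text{lfp}_i = \sigma_P^i(\emptyset)$ proving the following invariant: for every rule $p \leftarrow \dng{a_1}, \dots, \dng{a_m}$ of $\text{lfp}_i$ and every index $j$, there is a walk in $\dg{P}$ from $a_j$ to $p$ consisting of exactly one negative arc followed by zero or more positive arcs (so the walk itself is negative). The base case $i = 1$ is immediate, since $\sigma_P(\emptyset)$ produces precisely those rules of $P$ whose positive body is empty, and the required walk is then the single arc $(a_j, p, \ominus) \in \dg{P}$. For the inductive step, a rule of $\text{lfp}_{i+1}$ has the form $\sigma_r(\{r_1, \dots, r_j\})$ with $r \in P$ and $r_t \in \text{lfp}_i$. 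Its negative-body atoms split into those inherited from $\nbody{r}$, each witnessed by a single negative arc of $\dg{r} \subseteq \dg{P}$ to $\head{r}$, and those of the form $q_t^s$ arising from the expansion of some $r_t$ whose head $q_t$ lies in $\pbody{r}$. For the latter I invoke the induction hypothesis on $r_t$ to obtain a walk $q_t^s \to \dots \to q_t$ of the required shape in $\dg{P}$, and extend it by the positive arc $(q_t, \head{r}, \oplus) \in \dg{r}$, preserving the ``one negative arc then positives'' shape.

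Equipped with this invariant, I take an assumed negative cycle $v_1 \to v_2 \to \dots \to v_n \to v_1$ of $\dg{\lfp{P}}$ with $n$ odd and replace every arc by its corresponding walk, producing a closed walk in $\dg{P}$ with exactly $n$ negative arcs (one per segment), hence of odd parity. The classical fact that any closed walk in a digraph decomposes into a multiset of simple cycles whose parities (negative-arc counts modulo $2$) sum to that of the walk then forces at least one simple cycle of odd parity, i.e.\ a negative cycle of $\dg{P}$, contradicting the hypothesis. The main obstacle is the clean formulation of the inductive invariant, specifically pinning down that each $\sigma_r$ step appends positive arcs after the single negative arc supplied by the innermost rule; the concluding closed-walk-to-cycle step is standard but worth stating explicitly to close the argument.
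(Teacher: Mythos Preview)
Your argument is correct. The inductive invariant---that every negative body atom of a rule in $\text{lfp}_i$ reaches the head in $\dg{P}$ via one negative arc followed by zero or more positive arcs---is exactly the right shape, and your base and inductive cases are sound given the definition of $\sigma_P$. The closing step (a closed walk with an odd number of negative arcs decomposes into simple cycles whose negative-arc counts sum to an odd number, forcing at least one negative cycle) is indeed standard and correctly applied.

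By contrast, the paper does not give a proof at all: it simply invokes Lemma~5.3 of Fages~(1994) and moves on. So your approach is genuinely different in presentation---you supply a self-contained combinatorial argument where the paper defers to an external reference. What your route buys is independence from that citation and an explicit structural description of how arcs of $\dg{\lfp{P}}$ lift to sign-controlled walks in $\dg{P}$; what the paper's route buys is brevity. Since the cited lemma almost certainly proceeds along very similar lines (unfolding the fixpoint iterations and tracking signs), your proof is likely close in spirit to the original, just spelled out in full.
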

\begin{proof}
	It directly follows from Lemma 5.3 of~\cite{cois1994consistency}.
\end{proof}

\begin{proposition}[\cite{TBFS24lpnmr-semantics}]\label{prop:subset-supported-ts-clark}
	Let \(P\) be a finite ground program.
	Let \(T(P)\) denote the set of all supported trap spaces of \(P\).
	Let \(C(P)\) denote the set of all 3-valued models of \(\cf{P}\) (i.e., the Clark's completion of \(P\)).
	For every supported trap space \(I \in T(P)\), there is a model \(I' \in C(P)\) such that \(I' \leq_s I\).
\end{proposition}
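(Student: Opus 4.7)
The plan is to pick $I'$ as a $\leq_s$-minimal supported trap space refining $I$, and then to argue that this minimality forces $I'$ to satisfy the full Clark's completion, not merely its $\leftarrow$-part.

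First I would consider $\mathcal{S} = \{J \in T(P) \mid J \leq_s I\}$, which is non-empty (it contains $I$) and finite (since $\atom{P}$ is finite), hence admits a $\leq_s$-minimal element $I'$. By the characterization recalled in Section~\ref{sec:preliminaries}, supported trap spaces are precisely the 3-valued models of $\cfl{P}$ with respect to $\leq_s$, so the condition $I' \in T(P)$ amounts to: for every atom $a$, $I'(a) \neq \textbf{u}$ implies $I'(a) = I'(\rhs{a})$. The stronger condition $I' \in C(P)$ additionally requires $I'(\rhs{a}) = \textbf{u}$ whenever $I'(a) = \textbf{u}$. Therefore, if $I' \notin C(P)$, some atom $a$ witnesses $I'(a) = \textbf{u}$ together with $I'(\rhs{a}) \in \{\textbf{t},\textbf{f}\}$.

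The main step is then to derive a contradiction by refining $I'$. I would let $I''$ agree with $I'$ everywhere except at $a$, where $I''(a) := I'(\rhs{a})$. Then $I'' <_s I'$ strictly. Moreover $I'' \leq_s I$ still holds, because $I' \leq_s I$ together with $I'(a) = \textbf{u}$ forces $I(a) = \textbf{u}$, so refining only at $a$ stays below $I$. The main obstacle is checking that $I''$ is again a supported trap space, and this rests on the Kleene monotonicity of 3-valued evaluation: refining an interpretation can only turn $\textbf{u}$-valued formulas definite, never change an already-definite value. Applied to $I'' <_s I'$, this yields $I''(\rhs{a}) = I'(\rhs{a}) = I''(a)$, and for any other atom $b$ with $I''(b) \neq \textbf{u}$ it gives $I''(b) = I'(b) = I'(\rhs{b}) = I''(\rhs{b})$. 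Hence $I'' \in \mathcal{S}$ lies strictly below $I'$, contradicting the choice of $I'$.

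As an alternative, via Theorem~\ref{theorem:lp-bn-supported-ts} the statement translates into Boolean-network language as the standard fact that every trap space of the encoded BN contains a $\leq_s$-minimal trap space, which is automatically a 3-valued fixed point of $f$; the direct argument above is essentially that fact unpacked in logic-program notation.
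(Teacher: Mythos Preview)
Your proof is correct and takes a genuinely different route from the paper's sketch. The paper argues constructively: starting from $I^0 = I$, it iterates the global operator $I^{j+1}(a) := I^j(\rhs{a})$ for all atoms simultaneously, shows each iterate remains a supported trap space, and uses finiteness of $\gamma(I)$ to reach a fixed point $I^j \in C(P)$ with $I^j \leq_s I$. You instead pick $I'$ non-constructively as a $\leq_s$-minimal element of $\{J \in T(P) : J \leq_s I\}$ and refine a \emph{single} offending atom to obtain a contradiction. Both arguments rest on the same two ingredients --- the $\cfl{P}$-characterization of supported trap spaces recalled in Section~\ref{sec:preliminaries} and Kleene monotonicity under $\leq_s$-refinement --- so neither is strictly deeper. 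Your version is a bit crisper and makes the role of minimality explicit (it shows directly that $\leq_s$-minimal supported trap spaces lie in $C(P)$, which is precisely how the proposition is used in Theorem~\ref{theorem:regular-stable-ts} and Lemma~\ref{lemma:tight-lp-regular-min-ts-bn}), whereas the paper's iteration has the advantage of being an explicit procedure that computes $I'$ from $I$. Your closing remark, that via Theorem~\ref{theorem:lp-bn-supported-ts} this is the Boolean-network fact that minimal trap spaces are 3-valued fixed points of the encoded BN, is also apt and not made explicit in the paper.
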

\begin{proof}[Sketch of proof]
	Let \(I^j\) be an arbitrary supported trap space in \(T(P)\).
	We construct a 3-valued interpretation \(I^{j + 1}\) as follows: \(\forall a \in \atom{P}, I^{j + 1}(a) = I^j(\rhs{a})\).
	We prove that \(I^{j + 1}\) is also a supported trap space of \(P\).
	For every supported trap space \(I\) in \(T(P)\), we start with \(I^j = I\) and repeat the above process by increasing \(j\) by 1, and finally reach the case that \(I^{j + 1} = I^j\) because \(\gamma(I)\) is finite.
	By construction, \(I^j(a) = I^j(\rhs{a}), \forall a \in \atom{P}\), and \(I^j \leq_s I\).
	Hence, by setting \(I' = I^j\), there is a model \(I' \in C(P)\) such that \(I' \leq_s I\).
\end{proof}

\begin{theorem}[\cite{TBFS24lpnmr-semantics}]\label{theorem:regular-stable-ts}
	Let \(P\) be a finite ground program.
	Then a 3-valued interpretation \(I\) is a regular model of \(P\) iff \(I\) is a \(\leq_s\)-minimal stable trap space of \(P\).
\end{theorem}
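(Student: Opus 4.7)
The plan is to reduce the theorem to a stable analogue of Proposition~\ref{prop:subset-supported-ts-clark}: for every stable trap space $I$ of $P$, there exists a stable partial model $I'$ of $P$ with $I' \leq_s I$. Once this supporting lemma is available, both directions of the iff are immediate, because every stable partial model is itself a stable trap space, as recalled after the definition of stable trap spaces.

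To prove the supporting lemma I would mimic the construction used in the sketch of Proposition~\ref{prop:subset-supported-ts-clark}, but replace the Clark-completion operator $a \mapsto \rhs{a}$ by Przymusinski's 3-valued Gelfond--Lifschitz operator $\Phi$, which sends a 3-valued interpretation $J$ to the $\leq_t$-least 3-valued model of the 3-valued reduct $P^J$. The stable partial models of $P$ are precisely the fixed points of $\Phi$. Starting from $I^0 = I$, I iterate $I^{j+1} = \Phi(I^j)$ and maintain two invariants: (a) $I^{j+1}$ is again a stable trap space of $P$, and (b) $I^{j+1} \leq_s I^j$. The finiteness of the set of 3-valued interpretations on $\atom{P}$, together with (b), forces the sequence to stabilize at some $I^\ast$ with $I^\ast = \Phi(I^\ast)$; this $I^\ast$ is a stable partial model and, by transitivity of $\leq_s$, satisfies $I^\ast \leq_s I$.

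The main obstacle is establishing invariants (a) and (b). For (b), the key observation is that for every 2-valued interpretation $J \in \gamma(I^j)$ the 2-valued Gelfond--Lifschitz reduct $P^J$ refines the 3-valued reduct $P^{I^j}$, so $F_P(J)$ extends the definite part of $\Phi(I^j)$; combined with the trap condition $\{F_P(J) \mid J \in \gamma(I^j)\} \subseteq \gamma(I^j)$, this should force $\gamma(\Phi(I^j)) \subseteq \gamma(I^j)$, i.e., $\Phi(I^j) \leq_s I^j$. For (a), I would use the well-known anti-monotonicity of the reduct transformation composed with the monotonicity of the $\leq_t$-least model, which together give monotonicity of $\Phi$ on $\leq_s$-chains, to show that $\gamma(\Phi(I^j))$ is itself closed under $F_P$. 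The delicate point is the subtle interplay between the $\leq_t$-least model of a 3-valued reduct and the values of $F_P$ on its 2-valued completions; this is where I expect the real work to lie, and where a careful case analysis on the truth value of each atom under $\Phi(I^j)$ (true, false, or unknown) will be needed.

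With the supporting lemma in hand, the theorem follows easily. For the forward direction, if $I$ is a regular model then it is a $\leq_s$-minimal stable partial model, hence a stable trap space; any stable trap space $I' <_s I$ would, by the lemma, yield a stable partial model $I'' \leq_s I' <_s I$, contradicting minimality, so $I$ is also a $\leq_s$-minimal stable trap space. Conversely, if $I$ is a $\leq_s$-minimal stable trap space, the lemma produces a stable partial model $I' \leq_s I$; since stable partial models are stable trap spaces, the $\leq_s$-minimality of $I$ among stable trap spaces forces $I' = I$, so $I$ itself is a stable partial model, and no stable partial model can sit strictly below $I$ (again because it would be a smaller stable trap space), whence $I$ is regular.
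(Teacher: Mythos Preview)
Your plan is sound but takes a genuinely different route from the paper. The paper does not iterate the three-valued Gelfond--Lifschitz operator $\Phi$ on $P$ at all; instead it passes to the least fixpoint program $\lfp{P}$, which is quasi-interpretation, so that stable and supported trap spaces (and stable and supported partial models) coincide there. Proposition~\ref{prop:subset-supported-ts-clark}, already proved for the supported side, is then applied verbatim to $\lfp{P}$, and the conclusion is transported back to $P$ using that $P$ and $\lfp{P}$ share both their stable transition graph (Theorem~\ref{theorem:lfp-transition-graph}) and their set of stable partial models. What the paper's detour buys is modularity: no new analysis of $\Phi$ is required. What your direct approach buys is that it stays with $P$ and makes the shrinking mechanism explicit. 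One caution on your sketch of invariant (b): from ``$F_P(J)$ extends the definite part of $\Phi(I^j)$ for every $J\in\gamma(I^j)$'' together with the trap condition you cannot logically infer $\gamma(\Phi(I^j))\subseteq\gamma(I^j)$; both facts only say something about the image set $\{F_P(J):J\in\gamma(I^j)\}$. The argument that actually works uses the interval description of $\Phi$: with $S=\{a:I^j(a)=\textbf{t}\}$ and $T=\{a:I^j(a)\neq\textbf{f}\}$ one has $\gamma(\Phi(I^j))=[F_P(T),F_P(S)]$; since $S,T\in\gamma(I^j)$, the trap condition places both endpoints inside $[S,T]=\gamma(I^j)$, which is exactly (b), and anti-monotonicity of $F_P$ applied to $F_P(T)\subseteq J\subseteq F_P(S)$ then yields (a) immediately.
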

\begin{proof}[Sketch of proof]
	Let \(\lfp{P}\) be the least fixpoint of \(P\).
	By Proposition~\ref{prop:subset-supported-ts-clark}, we can deduce that \(\leq_s\)-minimal supported trap spaces of \(\lfp{P}\) coincide with \(\leq_s\)-minimal supported (also stable) partial models spaces of \(\lfp{P}\).
	\(P\) and \(\lfp{P}\) have the same set of stable partial models~\cite{DBLP:journals/jlp/AravindanD95}.
	By Theorem~\ref{theorem:lfp-transition-graph}, \(P\) and \(\lfp{P}\) have the same stable transition graph, thus they have the same set of stable trap spaces.
	Since stable trap spaces of \(\lfp{P}\) coincide with its supported trap spaces, we can conclude the theorem.
\end{proof}

\begin{theorem}[Theorem 1 of~\cite{richard2019positive}]\label{theorem:bn-unique-att}
	Let \(f\) be a BN\@.
	If \(\ig{f}\) has no cycle, \(\atg{f}\) has a unique attractor that is also the unique fixed point of \(f\).
\end{theorem}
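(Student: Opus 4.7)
My plan is to exploit the acyclicity of $\ig{f}$ by fixing a topological order $v_1, \dots, v_n$ of $\var{f}$ (so that every arc of $\ig{f}$ goes from a smaller to a larger index) and then propagating values along this order. The crucial observation, read off directly from the definition of the influence graph, is that if $u$ is not an in-neighbour of $v$ in $\ig{f}$ then $f_v(x[u \leftarrow 0]) = f_v(x[u \leftarrow 1])$ for every state $x$, so $f_v$ is independent of $u$; in particular, since $\ig{f}$ is acyclic, $f_{v_i}$ depends only on variables among $v_1, \dots, v_{i-1}$, and $f_{v_1}$ is a constant function. This is the single lemma that drives the whole proof.

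First, I would construct a fixed point $x^*$ by induction along the topological order. Set $x^*_{v_1}$ to be the constant value of $f_{v_1}$; having defined $x^*_{v_1}, \dots, x^*_{v_{i-1}}$, set $x^*_{v_i} := f_{v_i}(x^*)$, which is well-defined because $f_{v_i}$ only reads the already-fixed coordinates. By construction $f(x^*) = x^*$. For uniqueness, if $y$ is any fixed point, the same induction forces $y = x^*$: $y_{v_1} = f_{v_1}(y)$ equals the unique constant value of $f_{v_1}$, and if $y$ agrees with $x^*$ on $v_1, \dots, v_{i-1}$ then $y_{v_i} = f_{v_i}(y) = f_{v_i}(x^*) = x^*_{v_i}$, since $f_{v_i}$ only looks at those coordinates.

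Next I would show that $x^*$ is reachable from every state $s$ in $\atg{f}$. Starting from $s$, update $v_1, v_2, \dots, v_n$ one variable at a time in topological order. By induction on $i$, after the $i$th update the current state agrees with $x^*$ on $v_1, \dots, v_i$: the earlier coordinates are untouched because asynchronous updates modify one variable at a time, and the newly computed value at $v_i$ is $f_{v_i}$ evaluated on a state that already matches $x^*$ on $v_1, \dots, v_{i-1}$, which is exactly $x^*_{v_i}$. After the full sweep the state equals $x^*$. Since $\{x^*\}$ is a trap set (because $x^*$ is a fixed point) and is reachable from every state, any attractor $A$ must satisfy $x^* \in A$ by reachability from its own states, and subset-minimality then forces $A = \{x^*\}$.

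The main obstacle I anticipate is making the ``$f_{v_i}$ depends only on $v_1, \dots, v_{i-1}$'' step fully rigorous: one must carefully convert the graphical fact ``neither $(uv_i,\oplus)$ nor $(uv_i,\ominus)$ is an arc of $\ig{f}$'' into the Boolean statement ``$f_{v_i}(x) = f_{v_i}(x[u \leftarrow b])$ for every state $x$ and every $b \in \mathbb{B}$'', and then iterate this over several independent coordinates to conclude that $f_{v_i}(s) = f_{v_i}(s')$ whenever $s$ and $s'$ agree on $v_1, \dots, v_{i-1}$. Once this auxiliary lemma is in place, the construction of $x^*$, its uniqueness, and the topological reachability sweep are all clean inductions, and the unique-attractor conclusion follows immediately.
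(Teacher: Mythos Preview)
The paper does not supply its own proof of this statement: it is quoted verbatim as Theorem~1 of~\cite{richard2019positive} and used as a black box in the proof of Theorem~\ref{theo:locally-stratified-program}. So there is nothing to compare against on the paper's side.

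Your argument is the standard and correct one. The only point you flag as an obstacle---turning ``$u$ is not an in-neighbour of $v$ in $\ig{f}$'' into ``$f_v$ is independent of $u$''---is immediate from the definition of $\ig{f}$ given in the paper: if neither $f_v(x[u\leftarrow 0])<f_v(x[u\leftarrow 1])$ nor $f_v(x[u\leftarrow 0])>f_v(x[u\leftarrow 1])$ holds for any state $x$, then $f_v(x[u\leftarrow 0])=f_v(x[u\leftarrow 1])$ for all $x$. Iterating over the non-in-neighbours gives exactly the lemma you need. The topological sweep for reachability in $\atg{f}$ is also fine; when updating $v_i$ would not change the state you simply skip the transition, and the invariant that the current state agrees with $x^*$ on $v_1,\dots,v_i$ is preserved.
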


\begin{theorem}[Theorem 12 of~\cite{richard2019positive}]\label{theorem:no-neg-cycle-bn}
	Let \(f\) be a BN\@.
	If \(\ig{f}\) has no negative cycle, then \(\atg{f}\) has no cyclic attractor.
\end{theorem}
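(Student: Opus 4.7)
The plan is to reduce the statement to the monotone case via the classical balance property of signed graphs, and then to establish that monotone asynchronous BNs admit no cyclic attractor.

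First, I would show that a signed directed graph contains no negative cycle iff it is \emph{balanced} in Harary's sense: the vertex set admits a bipartition $\var{f} = V^{+} \cup V^{-}$ such that every arc with both endpoints in the same class is positive and every arc crossing the bipartition is negative. A standard walk-based argument — signs multiply along paths, and the absence of negative cycles means the sign of any walk between two fixed vertices is invariant — yields such a bipartition, for instance by fixing a vertex in each weakly connected component and splitting by the sign of an arbitrary spanning-tree walk from it.

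Using the bipartition, I would define the coordinate-wise involution $\sigma \colon \mathbb{B}^{|\var{f}|} \to \mathbb{B}^{|\var{f}|}$ that negates the coordinates in $V^{-}$, and set $g = \sigma \circ f \circ \sigma$. A direct check shows that $\ig{g}$ is obtained from $\ig{f}$ by flipping the sign of every crossing arc, so $\ig{g}$ contains only positive arcs; equivalently, $g$ is coordinate-wise monotone non-decreasing. Because $\sigma$ commutes with single-coordinate asynchronous updates, it induces a graph isomorphism $\atg{f} \to \atg{g}$, so cyclic attractors of $f$ correspond bijectively to cyclic attractors of $g$, and it suffices to rule out the latter.

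Finally, I would prove that a monotone $g$ admits no cyclic attractor in $\atg{g}$. Suppose $A$ is one and pick $x \in A$ maximal for the coordinate-wise order. Since $A$ is cyclic, $x$ is not a fixed point, so some coordinate $v$ satisfies $g_v(x) \neq x_v$; if $g_v(x) = 1$, the asynchronous transition $x \to x[v \leftarrow 1]$ would produce an element of $A$ strictly above $x$, violating maximality. Hence every coordinate $v$ unstable at $x$ satisfies $x_v = 1$ and $g_v(x) = 0$, and symmetrically every coordinate unstable at a minimal $y \in A$ satisfies $y_v = 0$ and $g_v(y) = 1$. Since $A$ is a bottom strongly connected component of $\atg{g}$, there is an asynchronous path from $y$ to $x$ inside $A$; tracking how a fixed unstable coordinate evolves along this path, together with monotonicity (which forces $g_v(y) \leq g_v(x)$ coordinatewise), yields the required contradiction. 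The bipartition-and-flip reduction is routine; the main obstacle is this last extremal argument, since it must exploit the strong connectivity of the attractor rather than the two extremal states alone. A potential-function variant — for example, using the Hamming distance from the current state to the coordinate-wise join of $A$, and showing it cannot strictly increase under asynchronous transitions in a monotone BN — offers a cleaner alternative I would try first.
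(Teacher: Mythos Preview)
The paper does not prove this theorem; it is quoted verbatim from the cited survey and used as a black box. So there is nothing in the paper to compare against, and I assess your proposal on its own merits.

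Your first step contains a genuine error. The claim that a signed directed graph with no negative cycle must be balanced is false. Take $V=\{a,b,c\}$ with arcs $a\to b$ (positive), $b\to c$ (negative), $a\to c$ (positive). This is an acyclic digraph, hence has no negative cycle whatsoever, yet the switching constraints $\sigma(a)\sigma(b)=+1$, $\sigma(b)\sigma(c)=-1$, $\sigma(a)\sigma(c)=+1$ are jointly unsatisfiable. Your walk-based justification breaks exactly here: the two directed walks $a\to c$ and $a\to b\to c$ carry opposite signs, but no directed cycle arises because neither walk can be reversed. Harary's balance theorem governs undirected cycles, whereas the hypothesis only rules out negative \emph{directed} cycles. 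The example is realised by an actual influence graph (e.g.\ $f_a=0$, $f_b=a$, $f_c=a\wedge\neg b$), so the gap is not an artefact of allowing non-realisable signed graphs.

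The monotone reduction can be salvaged, but only locally: inside each strongly connected component the absence of negative cycles does force balance (any two $u\to v$ paths within the component can be closed into directed cycles via a $v\to u$ path), so each SCC can be switched to a monotone subnetwork; one then argues by induction along the condensation DAG, treating already-stabilised components as frozen inputs. That is substantially more than a single global flip. The standard proofs in the Boolean-network literature typically bypass switching altogether and show directly that every state has an asynchronous trajectory to a fixed point via such an SCC-by-SCC stabilisation. Your step~3 sketch for the monotone case is plausible but, as you acknowledge, incomplete; the potential-function variant you mention at the end is indeed the cleaner way to finish it.
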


\subsection{Unicity of regular and stable models}

To illustrate better applications of the connection between finite ground programs and Boolean networks, we start with providing a probably simpler proof for the finite case of a well-known result on the unicity of regular and stable models in locally stratified programs~\cite{DBLP:journals/amai/EiterLS97}.

\begin{theorem}[\cite{DBLP:journals/amai/EiterLS97}]\label{theo:locally-stratified-program}
	If \(P\) is a locally stratified finite ground program, then \(P\) has a unique regular model that is also the unique stable model of \(P\).
\end{theorem}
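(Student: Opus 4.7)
The plan is to exploit the connection established in Section~\ref{sec:connection} and reduce the statement to Theorem~\ref{theorem:bn-unique-att} via the least fixpoint construction. Set \(L = \lfp{P}\). Since \(P\) is locally stratified, Theorem~\ref{theorem:lfp-no-cycle} yields that \(\dg{L}\) has no cycle at all. Because \(L\) is a quasi-interpretation finite ground program (by its construction in Section~\ref{sec:preliminaries}), the final observation in the proof of Theorem~\ref{theorem:ig-dg} tells us that the influence graph \(\ig{f}\) of the BN \(f\) encoding \(L\) actually coincides with \(\dg{L}\); in particular \(\ig{f}\) has no cycle.

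Theorem~\ref{theorem:bn-unique-att} then delivers a unique attractor of \(\atg{f}\), which is the unique fixed point of \(f\). In the trap space language this fixed point is the unique minimal trap space of \(f\), since any minimal trap space must contain an attractor and here only this singleton attractor is available. Translating back through Theorem~\ref{theorem:lp-bn-supported-ts}, \(L\) has a unique \(\leq_s\)-minimal supported trap space, and it is 2-valued. Because \(L\) is quasi-interpretation, Theorem~\ref{theorem:quasi-transition-graph} gives \(\tgst{L} = \tgsp{L}\), so the stable and supported trap spaces of \(L\) coincide; hence \(L\) also has a unique \(\leq_s\)-minimal stable trap space, still 2-valued.

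To conclude, Theorem~\ref{theorem:lfp-transition-graph} asserts that \(P\) and \(L\) share their stable transition graph and therefore their stable trap spaces, so \(P\) admits exactly one \(\leq_s\)-minimal stable trap space, which by Theorem~\ref{theorem:regular-stable-ts} is the unique regular model \(M\) of \(P\). Since \(M\) is 2-valued it is already a stable model; and any second stable model would itself be 2-valued, hence automatically \(\leq_s\)-minimal as a stable trap space, contradicting the unicity just obtained. The main obstacle I anticipate is purely bookkeeping: most of the intermediate results apply to the quasi-interpretation program \(L\) rather than to \(P\), and each statement must be lifted back to \(P\) using Theorem~\ref{theorem:lfp-transition-graph} together with the fact that \(P\) and \(L\) have the same stable partial models; once this correspondence is tracked carefully, the proof reduces to assembling the four theorems above.
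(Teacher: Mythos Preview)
Your proposal is correct and follows essentially the same route as the paper: pass to \(L=\lfp{P}\), use Theorem~\ref{theorem:lfp-no-cycle} to kill all cycles in \(\dg{L}\), encode \(L\) as a BN \(f\), invoke Theorem~\ref{theorem:bn-unique-att} to get a unique fixed point, and then translate back via Theorems~\ref{theorem:lp-bn-supported-ts} and~\ref{theorem:regular-stable-ts}. The only cosmetic differences are that the paper uses merely \(\ig{f}\subseteq\dg{L}\) (subgraph suffices) rather than equality, and it lifts the conclusion from \(L\) to \(P\) by citing that \(P\) and \(\lfp{P}\) share regular and stable models~\cite{DBLP:journals/jlp/AravindanD95}, whereas you do the lifting at the level of stable trap spaces via Theorem~\ref{theorem:lfp-transition-graph}; both are equivalent bookkeeping.
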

\begin{proof}[New proof]
	Let \(\lfp{P}\) denote the least fixpoint of \(P\).
	Let \(f\) be the encoded BN of \(\lfp{P}\).
	By Theorem~\ref{theorem:lfp-no-cycle}, \(\dg{\lfp{P}}\) has no cycle.
	Since \(\ig{f}\) is a sub-graph of \(\dg{\lfp{P}}\) by Theorem~\ref{theorem:ig-dg}, it also has no cycle.
	By Theorem~\ref{theorem:bn-unique-att}, \(\atg{f}\) has a unique attractor that is also the unique fixed point of \(f\).
	\(P\) and \(\lfp{P}\) have the same set of regular (also stable) models~\cite{DBLP:journals/jlp/AravindanD95}.
	By Theorem~\ref{theorem:regular-stable-ts}, regular models of \(\lfp{P}\) are \(\leq_s\)-minimal stable trap spaces of \(\lfp{P}\).
	Since \(\lfp{P}\) is a quasi-interpretation finite ground program, its stable trap spaces coincide with its supported trap spaces.
	Supported trap spaces of \(\lfp{P}\) coincide with trap spaces of \(f\) by Theorem~\ref{theorem:lp-bn-supported-ts}.
	Hence, regular models of \(P\) coincide with \(\leq_s\)-minimal trap spaces of \(f\).
	Since the number of \(\leq_s\)-minimal trap spaces of \(f\) are a lower bound of the number of attractors of \(\atg{f}\) and \(f\) has at least one \(\leq_s\)-minimal trap space~\cite{DBLP:journals/nc/KlarnerBS15}, \(f\) has a unique \(\leq_s\)-minimal trap space that is also the unique fixed point of \(f\).
	Hence, \(P\) has a unique regular model that is also the unique stable model of \(P\).
\end{proof}

\subsection{Existence of non-trivial regular models}

\begin{theorem}[Theorem 5.3(i) of~\cite{DBLP:journals/jcss/YouY94}]\label{theorem:no-neg-cycle-all-fixed-well-stra}
	Let \(P\) be a well-founded stratification normal logic program.
	If \(\dg{P}\) has no negative cycle, then all the regular models of \(P\) are 2-valued.
\end{theorem}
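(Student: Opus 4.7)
The plan is to leverage the Boolean-network connection established in Section~\ref{sec:connection}, together with Theorem~\ref{theorem:no-neg-cycle-bn} on BNs whose influence graph has no negative cycle. In fact, the well-founded stratification hypothesis is not used in this route at all; the argument rests solely on the ``no negative cycle'' hypothesis on \(\dg{P}\).

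First, I would move from \(P\) to its least fixpoint \(\lfp{P}\), since \(\lfp{P}\) is a quasi-interpretation finite ground program and \(P\) and \(\lfp{P}\) share their sets of stable partial models and their stable transition graph (cited in the sketch of Theorem~\ref{theorem:regular-stable-ts}). Applying Lemma~\ref{lemma:no-neg-cycle} to the hypothesis, \(\dg{\lfp{P}}\) has no negative cycle. Let \(f\) be the BN encoding \(\lfp{P}\) in the sense of Definition~\ref{definition:BN-encoding}. The proof of Theorem~\ref{theorem:ig-dg} gives \(\ig{f} = \dg{\lfp{P}}\) in the quasi-interpretation case, so \(\ig{f}\) has no negative cycle either. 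Theorem~\ref{theorem:no-neg-cycle-bn} then yields that every attractor of \(\atg{f}\) is a fixed point.

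Next, I would unfold the chain of identifications relating regular models of \(P\) to minimal trap spaces of \(f\). By Theorem~\ref{theorem:regular-stable-ts}, regular models of \(\lfp{P}\) are the \(\leq_s\)-minimal stable trap spaces of \(\lfp{P}\); by Theorem~\ref{theorem:quasi-transition-graph}, these coincide with the \(\leq_s\)-minimal supported trap spaces of \(\lfp{P}\); and by Theorem~\ref{theorem:lp-bn-supported-ts}, these are exactly the \(\leq_s\)-minimal trap spaces of \(f\). Now let \(m\) be any such minimal trap space of \(f\). Since \(\gamma(m)\) is a non-empty trap set on \(\atg{f}\), it contains at least one attractor, which by the previous paragraph is a fixed point \(x\). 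The singleton \(\{x\}\) is automatically a trap space of \(f\) (a sub-space with no \(\star\) coordinates that is closed under dynamics) and satisfies \(\{x\} \subseteq \gamma(m)\). By \(\leq_s\)-minimality of \(m\), we conclude \(\gamma(m) = \{x\}\), so \(m\) is 2-valued. Combining this with the identification above, every regular model of \(P\) is 2-valued.

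The main obstacle is bookkeeping: one must make sure each link in the chain (from regular models of \(P\), through minimal stable trap spaces of \(\lfp{P}\), through supported trap spaces, to trap spaces of \(f\)) is valid, in particular that passing to \(\lfp{P}\) preserves both the sets being minimized over and the absence of negative cycles in the relevant graph. Once this is set up, the decisive input is Theorem~\ref{theorem:no-neg-cycle-bn}, and the final step reduces to the clean observation that a \(\leq_s\)-minimal trap space containing a fixed point must itself be that singleton fixed point, hence 2-valued.
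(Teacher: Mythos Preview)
Your argument is correct, and in fact it is essentially the paper's proof of Theorem~\ref{theorem:no-neg-cycle-all-fixed-generic}, the generalization to arbitrary finite ground programs. Note that the paper does not itself prove Theorem~\ref{theorem:no-neg-cycle-all-fixed-well-stra}: it is quoted as a prior result of You and Yuan, and the paper's contribution is precisely to drop the well-founded stratification hypothesis via the Boolean-network route you describe. The only cosmetic difference is that you invoke the equality \(\ig{f} = \dg{\lfp{P}}\) from the quasi-interpretation case, whereas the paper uses the weaker inclusion \(\ig{f} \subseteq \dg{\lfp{P}}\); either suffices to transfer the absence of negative cycles.
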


Theorem~\ref{theorem:no-neg-cycle-all-fixed-well-stra} provides a sufficient (resp.\ necessary) condition on the dependency graph for the non-existence (resp.\ existence) of non-trivial regular models, but it is only limited to well-founded stratification normal logic programs.
Note that the well-founded stratification of a normal logic program is defined based on the ground instantiation of this program~\cite{DBLP:journals/jcss/YouY94}, and the set of all possible well-founded stratification programs in the finite case is only a small piece of the set of all possible finite ground programs~\cite{DBLP:journals/jcss/YouY94}.
To the best of our knowledge, the question if it is valid for any finite ground program is still open to date.
We answer this question in Theorem~\ref{theorem:no-neg-cycle-all-fixed-generic}.

\begin{theorem}\label{theorem:no-neg-cycle-all-fixed-generic}
	Let \(P\) be a finite ground program.
	If \(\dg{P}\) has no negative cycle, then all the regular models of \(P\) are 2-valued.
\end{theorem}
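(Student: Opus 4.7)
My plan is to reduce the statement to a question about trap spaces of the encoded Boolean network of the least fixpoint of $P$, and then invoke the no-negative-cycle theorem for asynchronous dynamics. First, I would pass from $P$ to $\lfp{P}$, which is a quasi-interpretation finite ground program. Since $P$ and $\lfp{P}$ share the same stable partial models~\cite{DBLP:journals/jlp/AravindanD95} and the same stable transition graph (Theorem~\ref{theorem:lfp-transition-graph}), they have the same stable trap spaces, hence by Theorem~\ref{theorem:regular-stable-ts} the same regular models. Let $f$ be the encoded BN of $\lfp{P}$. Because $\lfp{P}$ is quasi-interpretation, Theorem~\ref{theorem:quasi-transition-graph} gives $\tgst{\lfp{P}} = \tgsp{\lfp{P}}$, so its stable and supported trap spaces coincide, and by Theorem~\ref{theorem:lp-bn-supported-ts} these coincide with the trap spaces of $f$. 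Consequently the regular models of $P$ are exactly the $\leq_s$-minimal trap spaces of $f$, so it suffices to show that each such minimal trap space is a singleton.

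Next, I would transfer the graphical hypothesis to the BN side. Since $\dg{P}$ has no negative cycle, Lemma~\ref{lemma:no-neg-cycle} yields that $\dg{\lfp{P}}$ has no negative cycle either. By Theorem~\ref{theorem:ig-dg}, $\ig{f} \subseteq \dg{\lfp{P}}$, so $\ig{f}$ itself has no negative cycle. Theorem~\ref{theorem:no-neg-cycle-bn} then guarantees that $\atg{f}$ has no cyclic attractor, i.e.\ every attractor of $\atg{f}$ is a fixed point of $f$.

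Finally, let $m$ be a $\leq_s$-minimal trap space of $f$. As recalled in the preliminaries, every minimal trap space contains at least one attractor regardless of update scheme, so $m$ contains a fixed point $x$ of $f$. The singleton $\{x\}$ is itself a trap space of $f$ (it corresponds to the star-free sub-space whose values are those of $x$) and is contained in $m$, so by $\leq_s$-minimality we conclude $m = \{x\}$. Hence $m$ is a 2-valued interpretation, and every regular model of $P$ is 2-valued.

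The main obstacle I foresee is purely bookkeeping: keeping the chain of semantic equivalences straight and being explicit that it is the quasi-interpretation hypothesis on $\lfp{P}$ (not on $P$ itself) that lets stable trap spaces, supported trap spaces, and BN trap spaces collapse into one object, so that the Boolean-network machinery applies. Once that reduction is in place, the remaining steps are direct applications of Lemma~\ref{lemma:no-neg-cycle}, Theorems~\ref{theorem:ig-dg} and~\ref{theorem:no-neg-cycle-bn}, and the elementary observation that a fixed point is its own singleton trap space.
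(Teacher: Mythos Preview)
Your proposal is correct and follows essentially the same route as the paper's proof: pass to $\lfp{P}$, use Lemma~\ref{lemma:no-neg-cycle} and Theorem~\ref{theorem:ig-dg} to ensure $\ig{f}$ has no negative cycle, apply Theorem~\ref{theorem:no-neg-cycle-bn} so that every asynchronous attractor is a fixed point, and conclude that each $\leq_s$-minimal trap space of $f$ is a singleton; the only cosmetic difference is that the paper phrases the last step as a contradiction and performs the semantic bookkeeping after (rather than before) the Boolean-network argument.
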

\begin{proof}
	Let \(\lfp{P}\) be the least fixpoint of \(P\).
	By Lemma~\ref{lemma:no-neg-cycle}, \(\dg{\lfp{P}}\) has no negative cycle.
	Let \(f\) be the encoded BN of \(\lfp{P}\).
	Since \(\ig{f}\) is a sub-graph of \(\dg{\lfp{P}}\) by Theorem~\ref{theorem:ig-dg}, \(\ig{f}\) also has no negative cycle.
	By Theorem~\ref{theorem:no-neg-cycle-bn}, \(\atg{f}\) (i.e., the asynchronous transition graph of \(f\)) has no cyclic attractor.
	This implies that all attractors of \(\atg{f}\) are fixed points (*).
	Assume that \(f\) has a \(\leq_s\)-minimal trap space (say \(m\)) that is not a fixed point.
	Since every \(\leq_s\)-minimal trap space of \(f\) contains at least one attractor of \(\atg{f}\)~\cite{DBLP:journals/nc/KlarnerBS15}, there is an attractor (say \(A\)) of \(\atg{f}\) such that \(A \subseteq \gamma(m)\).
	By (*), \(A\) is a fixed point, leading to \(A <_s m\).
	This is a contradiction because \(m\) is \(\leq_s\)-minimal.
	Hence, all \(\leq_s\)-minimal trap spaces of \(f\) are fixed points.
	
	By Theorem~\ref{theorem:lp-bn-supported-ts}, trap spaces of \(f\) coincide with supported trap spaces of \(\lfp{P}\).
	\(\lfp{P}\) is a quasi-interpretation finite ground program, thus \(\tgst{\lfp{P}} = \tgsp{\lfp{P}}\).
	It follows that its supported trap spaces are also its stable trap spaces.
	Hence, \(\leq_s\)-minimal trap spaces of \(f\) are \(\leq_s\)-minimal stable trap spaces of \(\lfp{P}\).
	This implies that all \(\leq_s\)-minimal stable trap spaces of \(\lfp{P}\) are 2-valued.
	By Theorem~\ref{theorem:regular-stable-ts}, all regular models of \(\lfp{P}\) are 2-valued.
	\(P\) and \(\lfp{P}\) have the same set of regular models~\cite{DBLP:journals/jlp/AravindanD95}.
	Hence, all regular models of \(P\) are 2-valued.
\end{proof}

Theorem~\ref{theorem:no-neg-cycle-all-fixed-generic} implies that the undefinedness is only needed if there is a negative cycle in the DG, i.e., the regular model and stable model semantics are the same under the absence of negative cycles.
In addition, we can get from Theorem~\ref{theorem:no-neg-cycle-all-fixed-generic} a straightforward corollary: if the DG of a finite ground program has no negative cycle, then it has at least one stable model.
The reason is because a finite ground program always has at least one regular model~\cite{DBLP:journals/jcss/YouY94}.
This corollary is exactly the generalization of the finite case of Theorem 5.7 of~\cite{DBLP:journals/jcss/YouY94} for well-founded stratification programs.

\subsection{Unicity of regular models}

The work of~\cite{DBLP:journals/jcss/YouY94} shows a sufficient condition for the unicity of regular models for well-founded stratification normal logic programs.

\begin{theorem}[Theorem 5.3(ii) of~\cite{DBLP:journals/jcss/YouY94}]\label{theorem:no-pos-cycle-unique-regular-well-stra}
	Let \(P\) be a well-founded stratification program.
	If \(\dg{P}\) has no positive cycle, \(P\) has a unique regular model.
\end{theorem}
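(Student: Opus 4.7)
The plan is to mirror the BN-theoretic strategy used for Theorem~\ref{theo:locally-stratified-program} and Theorem~\ref{theorem:no-neg-cycle-all-fixed-generic}: pass to the least fixpoint, encode as a Boolean network, invoke a structural theorem on the influence graph, and translate the conclusion back through Theorem~\ref{theorem:regular-stable-ts}. In fact the well-founded stratification hypothesis will turn out to be unnecessary, so I will aim for the stronger statement that any finite ground program whose $\dg{P}$ has no positive cycle admits a unique regular model.

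First I would let $\lfp{P}$ denote the least fixpoint of $P$ and let $f$ be the BN encoding of $\lfp{P}$. Since $\lfp{P}$ is a quasi-interpretation program, every rule has empty positive body, so $\dg{\lfp{P}}$ contains \emph{no} positive arc at all and hence no positive cycle. By Theorem~\ref{theorem:ig-dg}, $\ig{f}$ is a subgraph of $\dg{\lfp{P}}$, so $\ig{f}$ also has no positive cycle. Next I would invoke the BN companion to Theorem~\ref{theorem:bn-unique-att} and Theorem~\ref{theorem:no-neg-cycle-bn}, namely: if $\ig{f}$ has no positive cycle then $\atg{f}$ has a unique attractor. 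Combining this with the observation that every $\leq_s$-minimal trap space of $f$ contains at least one attractor of $\atg{f}$, and that two distinct minimal trap spaces must be disjoint (since their intersection would again be a trap space strictly contained in both), yields a unique $\leq_s$-minimal trap space of $f$.

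Finally I would translate back to the program. By Theorem~\ref{theorem:lp-bn-supported-ts} the trap spaces of $f$ coincide with the supported trap spaces of $\lfp{P}$; by Theorem~\ref{theorem:quasi-transition-graph} (applied to $\lfp{P}$, which is quasi-interpretation) these coincide with the stable trap spaces of $\lfp{P}$; and by Theorem~\ref{theorem:regular-stable-ts} the $\leq_s$-minimal stable trap spaces are exactly the regular models. Hence $\lfp{P}$ has a unique regular model, and since $P$ and $\lfp{P}$ share the same regular models~\cite{DBLP:journals/jlp/AravindanD95}, so does $P$.

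The main obstacle is the BN ingredient: establishing that the absence of positive cycles in $\ig{f}$ forces $\atg{f}$ to have a unique attractor (equivalently, $f$ to have a unique $\leq_s$-minimal trap space). This is the analog for positive cycles of the Aracena/Richard line of results already cited, but it does not appear verbatim in the preparations, so either a precise reference must be located or the statement must be proven in situ, perhaps by noting that when all influences sit on negative cycles the asynchronous dynamics is confluent onto a single terminal component. Once this BN lemma is in hand, the remaining chain of equivalences is a routine application of the machinery built in Section~\ref{sec:connection} and in the preparations.
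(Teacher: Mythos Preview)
Your argument contains a fatal error at the very first substantive step: from ``$\dg{\lfp{P}}$ contains no positive \emph{arc}'' you infer ``hence no positive \emph{cycle}''. That inference is false. A positive cycle is one with an \emph{even} number of negative arcs, so a cycle built entirely from negative arcs and of even length is positive. For instance, in Example~\ref{example:lp} the least fixpoint $\lfp{P_1}$ has only negative arcs, yet the cycle $p \to q \to p$ (two negative arcs) is positive. Notice that your argument never actually invokes the hypothesis that $\dg{P}$ has no positive cycle; if it were sound it would prove that \emph{every} finite ground program has a unique regular model, which $P_1$ already refutes.

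The paper anticipates exactly this obstacle. Just before its own proof it remarks that the least-fixpoint route, which works for negative cycles, fails here because $\dg{\lfp{P}}$ can acquire positive cycles even when $\dg{P}$ has none (the example $P = \{a \leftarrow c;\, b \leftarrow c;\, c \leftarrow \dng{a}, \dng{b}\}$ is given). Instead the paper observes that ``$\dg{P}$ has no positive cycle'' forces $\pdg{P}$ to be acyclic, i.e.\ $P$ is tight; it then encodes $P$ itself (not $\lfp{P}$) as a BN $f$, uses Lemma~\ref{lemma:tight-lp-regular-min-ts-bn} to identify regular models of $P$ with $\leq_s$-minimal trap spaces of $f$, and applies Theorem~\ref{theorem:no-pos-cycle-unique-att} (this is precisely the BN ingredient you were searching for---it is already in the preparations) to get a unique asynchronous attractor and hence a unique minimal trap space. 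So the correct route bypasses $\lfp{P}$ entirely and leans on tightness instead.
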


Hereafter, we would like to show that the finite case of Theorem~\ref{theorem:no-pos-cycle-unique-regular-well-stra} is also true for any finite ground program.
Note however that the technique of using least fixpoint applied for negative cycles seems difficult to use for positive cycles because there is some finite ground program whose dependency graph has no positive cycle but the dependency graph of its least fixpoint can have positive cycle (e.g., \(P = \{a \leftarrow c; b \leftarrow c; c \leftarrow \dng{a}, \dng{b}\}\)).
We here use another approach.

\begin{theorem}[Theorem 3.4 of~\cite{DBLP:journals/entcs/PauleveR12}]\label{theorem:no-pos-cycle-unique-att}
	Let \(f\) be a BN\@.
	If \(\ig{f}\) has no positive cycle, then \(\atg{f}\) has a unique attractor.
\end{theorem}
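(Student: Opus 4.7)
The plan is to prove that an asynchronous Boolean network $f$ whose influence graph $\ig{f}$ has no positive cycle admits a unique attractor in $\atg{f}$, which is a Boolean-network analogue of Thomas's conjecture on multistationarity. I would proceed by induction on the number of variables $n = |\var{f}|$, using the sign structure at each step to constrain the attractor landscape.

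For the base case $n = 1$ with single variable $v$, the influence graph is either arcless, in which case $f_v$ is constant and $\atg{f}$ has a unique fixed point, or it consists of a self-loop. A positive self-loop is itself a positive cycle and is excluded by hypothesis; a negative self-loop means $f_v(v) = \neg v$, which yields the unique cyclic attractor $\{0,1\}$. In both permitted cases the attractor is unique.

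For the inductive step with $n \geq 2$, I would identify a vertex $v$ to freeze. If some variable has no incoming arcs in $\ig{f}$, then $f_v$ is constant; freezing $v$ to this constant value produces a reduced BN on $n-1$ variables whose influence graph is a subgraph of $\ig{f}$, hence still without positive cycles, and the inductive hypothesis lifts directly. Otherwise every variable has an incoming arc, so $\ig{f}$ contains a cycle. I would pick a vertex $v$ lying on such a cycle and consider the two slices $f|_{v=0}$ and $f|_{v=1}$; each slice inherits the no-positive-cycle hypothesis (its influence graph is a subgraph of $\ig{f}$ restricted to $\var{f} \setminus \{v\}$) and hence has a unique attractor $A_0$ and $A_1$ by the inductive hypothesis. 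It then remains to argue that the asynchronous flip of $v$ forces $A_0 \times \{v=0\}$ and $A_1 \times \{v=1\}$ to live inside a single attractor of $f$.

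The main obstacle is precisely this gluing step: without additional input, nothing prevents $f_v$ from being constantly equal to $0$ on $A_0$ and constantly equal to $1$ on $A_1$, which would leave them as two disjoint attractors of $f$. The no-positive-cycle hypothesis must be used to exclude this locking, essentially because such a configuration would create positive feedback through the sign pattern of the arcs incident to $v$. Making this precise is delicate: one must carefully choose $v$ (for instance on a shortest cycle, so that only negative sign propagations are induced by the freeze) and then argue that along the trajectory toward either $A_i$ the function $f_v$ is eventually forced to disagree with $v$'s current value, triggering the flip that connects the two slices. A conceptually cleaner route, which I would explore in parallel, is to construct a Lyapunov-style potential on the state space from the signed structure of $\ig{f}$ whose strict decrease along every asynchronous transition forces convergence to a single attractor; designing such a potential that exploits the absence of positive cycles while tolerating arbitrary negative cycles appears to be the combinatorial heart of the result.
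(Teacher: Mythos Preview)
The paper does not prove this statement; it is imported verbatim as Theorem~3.4 of \cite{DBLP:journals/entcs/PauleveR12} and invoked as a black box in the proof of Theorem~\ref{theorem:no-pos-cycle-unique-regular-generic}. There is therefore no in-paper argument to compare your proposal against.

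Judged on its own, your proposal is an honest outline that correctly isolates the difficulty but does not resolve it. You explicitly flag the gluing step as the ``combinatorial heart of the result'' and leave it open: neither the choice of $v$ on a shortest cycle nor the Lyapunov-potential idea is carried out, and your own counter-scenario (where $f_v\equiv 0$ on $A_0$ and $f_v\equiv 1$ on $A_1$) is never excluded. This is a genuine gap, not a detail: a freezing induction on $n$ gives no leverage on how the two slice attractors interact through $v$ unless one already controls the sign structure of the feedback through $v$, and that control is precisely what must be proved. The arguments in the Boolean-network literature for this result do not typically proceed by freezing a single variable and inducting on $n$; a common route is instead to establish a confluence property of the asynchronous dynamics---any two states can reach a common state---via an argument that reduces Hamming distance and uses the absence of positive cycles at each step. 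Until you supply such an argument, or an explicit potential function, what you have is a plan rather than a proof.
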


\begin{theorem}[Lemma 16 of~\cite{DBLP:journals/jancl/DietzHW14}]\label{theorem:Clark-equal-stable-partial-model}
	Supported partial models of a tight finite ground program coincide with its stable partial models.
\end{theorem}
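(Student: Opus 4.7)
The plan is to prove both inclusions separately. The inclusion that every stable partial model is a supported partial model holds for arbitrary finite ground programs (not only tight ones): if $I$ is the $\leq_t$-least 3-valued model of the reduct $P^I$, then unrolling the definition of $P^I$—where each occurrence of $\dng{b}$ is replaced according to $I(b)$—one checks that the value assigned to $a$ by the least model of $P^I$ equals $I(\rhs{a})$ evaluated under $I$, case-splitting on whether $I(b)$ is $\textbf{t}$ (rule removed), $\textbf{f}$ (literal removed), or $\textbf{u}$ (literal replaced by $\textbf{u}$). Hence $I(a) = I(\rhs{a})$ for every $a \in \atom{P}$, which is precisely the supported-partial-model condition.

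For the nontrivial direction, suppose $P$ is tight and $I$ is a supported partial model; I would show $I$ equals the $\leq_t$-least 3-valued model $J$ of $P^I$. Tightness means $\pdg{P}$ has no cycle, so $\atom{P}$ admits a well-founded ranking by positive-dependency depth. The main step is to prove $I(a) = J(a)$ by induction on this rank. At the base, atoms with no positive dependencies have $\rhs{a}$ involving only negated body literals; these are substituted by fixed truth values of $I$ in $P^I$, so the least-model computation gives $J(a) = I(\rhs{a}) = I(a)$ directly. For the inductive step, every positively-dependent atom $b \in \pbody{r}$ for some rule $r$ with $\head{r} = a$ has strictly smaller rank than $a$, so $I(b) = J(b)$ by the induction hypothesis; combined with the negation substitutions in $P^I$, the valuations of each $\bodyf{r}$ under $I$ and under $J$ coincide, yielding $J(a) = I(\rhs{a}) = I(a)$.

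The main obstacle is tracking the three-valued semantics across Przymusinski's iterative construction of the $\leq_t$-least model of $P^I$, in particular the propagation of $\textbf{u}$ values through the $\min_{\leq_t}$ and $\max_{\leq_t}$ operations in conjunctions and disjunctions. Tightness is essential precisely here: it forbids positive cycles that could let $I$ assign $\textbf{t}$ to an atom whose only justification under $I$ routes through itself—a circular support compatible with the Clark's completion but incompatible with a bottom-up least-fixpoint construction. In a non-tight program, such circular supports are exactly what prevent $I$ from matching $J$ at atoms sitting on the offending positive cycle.

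An alternative, more conceptual route would exploit the Boolean network connection of Section~\ref{sec:connection}: since $\ig{f} \subseteq \dg{P}$ for the encoded BN $f$, tightness of $P$ forces acyclicity of the positive part of $\ig{f}$, and one might hope to read off the equivalence from BN fixed-point theorems. However, the BN results cited in the preparation subsection target attractors and trap spaces rather than the finer distinction between supported and stable partial models, so I expect the direct inductive argument above to be the cleaner path.
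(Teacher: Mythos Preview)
The paper does not supply its own proof of this statement: it is quoted as Lemma~16 of~\cite{DBLP:journals/jancl/DietzHW14} and used as a black box in the proof of Lemma~\ref{lemma:tight-lp-regular-min-ts-bn}. There is therefore nothing in the present paper to compare your argument against.

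That said, your proposal is sound and follows the standard Fages-style strategy lifted to three values: the direction stable $\Rightarrow$ supported is general, and for the converse you use acyclicity of $\pdg{P}$ to rank $\atom{P}$ and prove $I(a)=J(a)$ by induction on that rank, exploiting that in $P^I$ all negative literals have already been replaced by constants determined by $I$. This is essentially how the cited reference and the original 2-valued result in~\cite{cois1994consistency} proceed; the only delicate point is the one you flag, namely tracking $\textbf{u}$ through the $\min_{\leq_t}/\max_{\leq_t}$ evaluation in Przymusinski's iteration, and your inductive setup handles it. Your closing remark is also accurate: the Boolean-network results invoked elsewhere in the paper speak about trap spaces and attractors, not about the supported/stable distinction at the level of partial models, so the direct inductive argument is indeed the appropriate route here.
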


\begin{lemma}\label{lemma:tight-lp-regular-min-ts-bn}
	Let \(P\) be a finite ground program and \(f\) be its encoded BN\@.
	If \(P\) is tight, then regular models of \(P\) coincide with \(\leq_s\)-minimal trap spaces of \(f\).
\end{lemma}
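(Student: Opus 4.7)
The plan is to chain together several equalities, leveraging the tightness hypothesis and the connection established between $P$ and $f$. By Theorem~\ref{theorem:lp-bn-supported-ts}, the trap spaces of $f$ are exactly the supported trap spaces of $P$, so $\leq_s$-minimal trap spaces of $f$ coincide with $\leq_s$-minimal supported trap spaces of $P$. Therefore it suffices to prove the identity
\[\text{regular models of } P \;=\; \leq_s\text{-minimal supported trap spaces of } P,\]
under the assumption that $P$ is tight.

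First, I would unfold the definition of regular models: a regular model of $P$ is a $\leq_s$-minimal stable partial model of $P$. Since $P$ is tight, Theorem~\ref{theorem:Clark-equal-stable-partial-model} states that supported partial models of $P$ coincide with stable partial models of $P$; hence the regular models of $P$ are precisely the $\leq_s$-minimal supported partial models of $P$. This is the step where tightness is used, and it is the only place it plays a role.

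Next, I would show that $\leq_s$-minimal supported partial models of $P$ coincide with $\leq_s$-minimal supported trap spaces of $P$. The key tool is Proposition~\ref{prop:subset-supported-ts-clark}: for every supported trap space $I$ of $P$ there exists a supported partial model (i.e.\ a 3-valued model of $\cf{P}$) $I'$ with $I' \leq_s I$. Combining this with the fact (already observed in the preliminaries) that every supported partial model is itself a supported trap space, a short two-direction argument yields the desired equality. Concretely, if $I$ is a $\leq_s$-minimal supported trap space, the $I'$ provided by the proposition satisfies $I' \leq_s I$ and is a supported trap space, so $I' = I$ and $I$ is a supported partial model; conversely, if $I$ is a $\leq_s$-minimal supported partial model, any supported trap space strictly below it would, via the proposition, yield a supported partial model strictly below $I$, contradicting minimality.

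Putting the three equalities together gives the chain
\[\text{regular models of } P \;=\; \leq_s\text{-min. supported partial models} \;=\; \leq_s\text{-min. supported trap spaces of } P \;=\; \leq_s\text{-min. trap spaces of } f,\]
which is the statement of the lemma. I do not anticipate a serious obstacle: the argument is essentially bookkeeping of previously established correspondences, with the only substantive ingredient being the tightness-driven collapse of stable and supported partial models (Theorem~\ref{theorem:Clark-equal-stable-partial-model}). The step most worth double-checking is the passage from ``minimal partial models'' to ``minimal trap spaces,'' where one must ensure that Proposition~\ref{prop:subset-supported-ts-clark} is applied in both directions, since the inclusion of partial models into trap spaces does not by itself preserve or reflect minimality.
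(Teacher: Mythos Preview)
Your proposal is correct and follows essentially the same route as the paper's proof: use tightness via Theorem~\ref{theorem:Clark-equal-stable-partial-model} to identify regular models with $\leq_s$-minimal supported partial models, then use Proposition~\ref{prop:subset-supported-ts-clark} (together with the inclusion of supported partial models in supported trap spaces) to identify these with $\leq_s$-minimal supported trap spaces, and finally apply Theorem~\ref{theorem:lp-bn-supported-ts}. Your two-direction argument for the middle step is in fact a slightly more explicit version of what the paper asserts in one line.
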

\begin{proof}
	Since \(P\) is tight, stable partial models of \(P\) coincide with supported partial models of \(P\) (i.e., 3-valued models of \(\cf{P}\)) by Theorem~\ref{theorem:Clark-equal-stable-partial-model}.
	Then regular models of \(P\) coincide with \(\leq_s\)-minimal supported partial models of \(P\).
	We have that trap spaces of \(f\) coincide with supported trap spaces of \(P\) by Theorem~\ref{theorem:lp-bn-supported-ts}.
	By Proposition~\ref{prop:subset-supported-ts-clark}, \(\leq_s\)-minimal supported partial models of \(P\) coincide with \(\leq_s\)-minimal supported trap spaces of \(P\).
	Hence, regular models of \(P\) coincide with \(\leq_s\)-minimal trap spaces of \(f\).
\end{proof}

\begin{theorem}\label{theorem:no-pos-cycle-unique-regular-generic}
	Let \(P\) be a finite ground program.
	If \(\dg{P}\) has no positive cycle, then \(P\) has a unique regular model.
\end{theorem}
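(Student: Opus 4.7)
The plan is to sidestep the least-fixpoint obstruction noted by the authors by first observing that the hypothesis already forces \(P\) to be tight, then invoking Lemma~\ref{lemma:tight-lp-regular-min-ts-bn} together with the positive-cycle theorem for Boolean networks (Theorem~\ref{theorem:no-pos-cycle-unique-att}) on the BN side.

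First I would check that if \(\dg{P}\) has no positive cycle, then \(P\) is tight. Any cycle of \(\pdg{P}\) consists only of positive arcs, hence contains zero (an even number of) negative arcs, making it a positive cycle of \(\dg{P}\); by hypothesis no such cycle exists, so \(\pdg{P}\) is acyclic and \(P\) is tight by definition. Let \(f\) now denote the encoded BN of \(P\). By Theorem~\ref{theorem:ig-dg}, \(\ig{f}\) is a subgraph of \(\dg{P}\), hence also has no positive cycle, and Theorem~\ref{theorem:no-pos-cycle-unique-att} yields that \(\atg{f}\) has a unique attractor \(A\). By Lemma~\ref{lemma:tight-lp-regular-min-ts-bn}, the regular models of \(P\) coincide with the \(\leq_s\)-minimal trap spaces of \(f\), so it suffices to show that \(f\) has a unique such trap space.

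Since every \(\leq_s\)-minimal trap space of \(f\) contains at least one attractor of \(\atg{f}\), every such trap space must contain \(A\). Suppose toward contradiction that \(m_1\) and \(m_2\) are two distinct \(\leq_s\)-minimal trap spaces of \(f\). Neither can be \(\leq_s\)-below the other (otherwise minimality would force equality), so \(m_1 \cap m_2\) is strictly \(\leq_s\)-below each; yet it is non-empty (it contains \(A\)), is itself a sub-space (as the intersection of two sub-spaces with non-empty overlap), and is closed under the BN dynamics (as the intersection of two forward-closed sets), hence a trap space strictly \(\leq_s\)-below \(m_1\), contradicting the \(\leq_s\)-minimality of \(m_1\). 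The main conceptual hurdle was the breakdown of the least-fixpoint rewriting that worked so cleanly for the negative-cycle case; fortunately the tightness observation furnishes exactly the right substitute, after which the BN-side deduction is essentially mechanical.
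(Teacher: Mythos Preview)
Your proof is correct and follows essentially the same route as the paper: observe that the hypothesis forces tightness, apply Lemma~\ref{lemma:tight-lp-regular-min-ts-bn}, pass to the encoded BN via Theorem~\ref{theorem:ig-dg}, and invoke Theorem~\ref{theorem:no-pos-cycle-unique-att} to get a unique asynchronous attractor, from which uniqueness of the \(\leq_s\)-minimal trap space follows. The only difference is cosmetic: where the paper appeals directly to the cited fact that the number of minimal trap spaces lower-bounds the number of asynchronous attractors, you spell this out via the intersection argument, which is a perfectly valid way to make the same step explicit.
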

\begin{proof}
	Since \(\dg{P}\) has no positive cycle, \(\pdg{P}\) (i.e., the positive dependency graph of \(P\)) has no cycle, i.e., \(P\) is tight.
	Let \(f\) be the encoded BN of \(P\).
	By Lemma~\ref{lemma:tight-lp-regular-min-ts-bn}, regular models of \(P\) coincide with \(\leq_s\)-minimal trap spaces of \(f\).
	Since \(\ig{f}\) is a sub-graph of \(\dg{P}\), it also has no positive cycle.
	By Theorem~\ref{theorem:no-pos-cycle-unique-att}, \(\atg{f}\) has a unique attractor.
	Since every \(\leq_s\)-minimal trap space of \(f\) contains at least one attractor of \(\atg{f}\) and \(f\) has at least one \(\leq_s\)-minimal trap space~\cite{DBLP:journals/nc/KlarnerBS15}, \(f\) has a unique \(\leq_s\)-minimal trap space.
	Hence, we can conclude that \(P\) has a unique regular model.
\end{proof}

Since a stable model is also a regular model, Theorem~\ref{theorem:no-pos-cycle-unique-regular-generic} implies that if \(\dg{P}\) has no positive cycle, then \(P\) has at most one stable model.
In addition, \(P\) may have no stable model because the unique regular model may be not 2-valued.
This result seems to be already known in the folklore of the logic program theory, but to the best of our knowledge, there is no existing formal proof for it except the one that we have directly proved recently in~\cite{TB24-static-analysis}.

\subsection{Upper bound for number of regular models}

To the best of our knowledge, there is no existing work connecting between regular models of a finite ground program and (positive/negative) feedback vertex sets of its dependency graph.
In~\cite{TB24-static-analysis}, we have shown that \(2^{|U^{+}|}\) is an upper bound for the number of stable models where \(U^{+}\) is a positive feedback vertex set of the dependency graph.
Since stable models are 2-valued regular models, we can naturally generalize this result for the case of regular models, i.e., \(3^{|U^{+}|}\) is an upper bound for the number of regular models.
The underlying intuition for the base of three is that in a regular model, the value of an atom can be \textbf{t}, \textbf{f}, or \textbf{u}.

\begin{theorem}\label{theorem:regular-pfvs-three}
	Let \(P\) be a finite ground program.
	Let \(U^{+}\) be a positive feedback vertex set of \(\dg{P}\).
	Then the number of regular models of \(P\) is at most \(3^{|U^+|}\).
\end{theorem}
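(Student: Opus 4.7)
The plan is to show that the restriction map $\Phi$, sending each regular model $I$ of $P$ to $I|_{U^+} \in \{\textbf{t},\textbf{f},\textbf{u}\}^{U^+}$, is injective; since the codomain has exactly $3^{|U^+|}$ elements, this immediately yields the claimed bound. The intuition is that fixing the values of atoms in $U^+$ breaks every positive cycle of $\dg{P}$ (because $U^+$ is a positive feedback vertex set), leaving a positive-cycle-free structure whose regular model is unique by Theorem~\ref{theorem:no-pos-cycle-unique-regular-generic}.

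To prove injectivity, assume $I_1$ and $I_2$ are regular models of $P$ with $I_1|_{U^+} = I_2|_{U^+} =: \bar I$. I build a reduced finite ground program $P_{\bar I}$ by substituting $\bar I$ into $P$: for each $a \in U^+$ with $\bar I(a) = \textbf{t}$, delete every rule with head $a$, add the fact $a$, drop $a$ from every positive body, and delete every rule whose negative body contains $a$; for each $a \in U^+$ with $\bar I(a) = \textbf{f}$, delete every rule with head $a$, delete every rule whose positive body contains $a$, and drop $a$ from every negative body; for each $a \in U^+$ with $\bar I(a) = \textbf{u}$, delete every rule with head $a$ and add the single rule $a \leftarrow \dng{a}$. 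I would then verify two key properties of $P_{\bar I}$: (i) $\dg{P_{\bar I}}$ has no positive cycle, because $\textbf{t}/\textbf{f}$-valued atoms of $U^+$ become isolated, $\textbf{u}$-valued ones only contribute a negative self-loop, and so every cycle of $\dg{P_{\bar I}}$ lives in $\atom{P} \setminus U^+$ using a subset of $\dg{P}$'s arcs (hence cannot be positive by the PFVS hypothesis); (ii) a 3-valued interpretation $I$ is a stable partial model of $P_{\bar I}$ if and only if it is a stable partial model of $P$ with $I|_{U^+} = \bar I$, established by a rule-by-rule comparison of the Przymusinski reducts. From (i) and Theorem~\ref{theorem:no-pos-cycle-unique-regular-generic}, $P_{\bar I}$ has a unique regular model $J^\star$. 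Because $I_1$ is $\leq_s$-minimal in all stable partial models of $P$, it is $\leq_s$-minimal in the subset with restriction $\bar I$; by (ii) that subset equals the set of stable partial models of $P_{\bar I}$, so $I_1$ is a regular model of $P_{\bar I}$ and hence equals $J^\star$. The same argument gives $I_2 = J^\star$, and so $I_1 = I_2$.

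The main obstacle is verifying property (ii), in particular the $\textbf{u}$-valued case. One must show that the gadget $a \leftarrow \dng{a}$ forces $I(a) = \textbf{u}$ in every stable partial model $I$ of $P_{\bar I}$: a direct inspection of the reduct shows that $I(a) = \textbf{t}$ causes the rule to be deleted and $a$ to be underivable (so $a$ would have to be $\textbf{f}$, a contradiction), $I(a) = \textbf{f}$ turns the rule into the fact $a$ and forces $a$ to $\textbf{t}$ (again a contradiction), while $I(a) = \textbf{u}$ yields the rule $a \leftarrow \textbf{u}$ whose $\leq_t$-least model value at $a$ is $\textbf{u}$, as required. One must additionally check that the $\textbf{t}/\textbf{f}$ substitutions for other atoms of $U^+$ are semantically neutral, so that the reducts $P^I$ and $P_{\bar I}^I$ agree rule-by-rule on atoms outside $U^+$ whenever $I|_{U^+} = \bar I$. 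Once this bookkeeping is done, the proof reduces to a clean invocation of Theorem~\ref{theorem:no-pos-cycle-unique-regular-generic}.
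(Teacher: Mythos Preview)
Your strategy mirrors the paper's almost exactly: for each assignment $\bar I\colon U^+\to\{\textbf{t},\textbf{f},\textbf{u}\}$ build a reduced program (the paper uses the very same $a\leftarrow\dng{a}$ gadget for $\textbf{u}$-valued atoms and the same body simplifications for $\textbf{t}/\textbf{f}$-valued ones), observe that it has no positive cycle, and invoke Theorem~\ref{theorem:no-pos-cycle-unique-regular-generic}. The only structural difference is that the paper phrases the correspondence via the stable transition graph $\tgst{P}$ and $\leq_s$-minimal stable trap spaces (through Theorem~\ref{theorem:regular-stable-ts}) rather than directly via stable partial models.

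There is, however, a genuine gap in your property~(ii): the biconditional does not hold. Take $P=\{a\leftarrow b;\ b\leftarrow a\}$, $U^+=\{a\}$, $\bar I(a)=\textbf{u}$; then $P_{\bar I}=\{a\leftarrow\dng{a};\ b\leftarrow a\}$ has the stable partial model $\{a=\textbf{u},\,b=\textbf{u}\}$, which is \emph{not} a stable partial model of $P$ (the $\leq_t$-least model of $P^I=P$ is $\{a=\textbf{f},\,b=\textbf{f}\}$). Only the forward inclusion is valid in general. Consequently your minimality transfer breaks: knowing that $I_1$ is $\leq_s$-minimal among stable partial models of $P$ with restriction $\bar I$ does not make it $\leq_s$-minimal among stable partial models of $P_{\bar I}$, because the latter set can be strictly larger; the unique regular model $J^\star$ of $P_{\bar I}$ might sit strictly below $I_1$ without being a stable partial model of $P$, so you cannot conclude $I_1=J^\star$. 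To close the argument you would need the extra fact that whenever some regular model of $P$ with restriction $\bar I$ exists, $J^\star$ is already a stable partial model of $P$---this is not covered by the reduct bookkeeping you sketch. (The paper's own correspondence claim, stated as an isomorphism of transition graphs, is equally terse on this point and is vulnerable to the same example; working at the level of stable trap spaces as the paper does is the intended route, but in either formulation this step deserves more care than a one-line assertion.)
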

\begin{proof}
	By Theorem~\ref{theorem:regular-stable-ts}, regular models of \(P\) coincide with \(\leq_s\)-minimal stable trap spaces of \(P\).
	For any mapping \(\widehat{I} : U^{+} \to \{\textbf{t}, \textbf{f}, \textbf{u}\}\), we build a new finite ground program \(\widehat{P}\) from \(P\) as follows.
	First, remove from \(P\) all the rules whose heads belong to \(U^{+}\).
	Second, remove all the rules whose body formulas are false under the values of the atoms in \(U^{+}\) and otherwise remove all the appearances of the atoms that are in \(U^{+}\) and not assigned to \(\textbf{u}\) in \(\widehat{I}\).
	Third, for any atom \(a \in U^{+}\) such that \(\widehat{I}(a) = \textbf{u}\), add the rule \(a \leftarrow \dng{a}\).
	We can see that the part of \(\tgst{P}\) induced by \(\widehat{I}\) is isomorphic to \(\tgst{\widehat{P}}\).
	Hence, \(\leq_s\)-minimal stable trap spaces of \(P\) induced by \(\widehat{I}\) one-to-one correspond to those of \(\widehat{P}\).
	\(U^{+}\) intersects all positive cycles of \(\dg{P}\).
	Every atom \(a \in U^{+}\) such that \(\widehat{I}(a) \neq \textbf{u}\) is removed from \(\dg{P}\).
	In the case that \(a \in U^{+}\) and \(\widehat{I}(a) = \textbf{u}\), all the arcs ending at \(a\) are removed and an negative arc \((aa,\ominus)\) is added.
	It follows that \(\dg{\widehat{P}}\) has no positive cycle.
	By Theorem~\ref{theorem:no-pos-cycle-unique-regular-generic}, \(\widehat{P}\) has a unique \(\leq_s\)-minimal stable trap space.
	There are \(3^{|U^{+}|}\) possible mappings \(\widehat{I}\), thus we can conclude the theorem.
\end{proof}

\begin{theorem}[Theorem 3.5 of~\cite{DBLP:journals/entcs/PauleveR12}]\label{theorem:att-pfvs-two}
	Let \(f\) be a BN\@.
	Let \(U^{+}\) be a positive feedback vertex set of \(\ig{f}\).
	Then the number of attractors of \(\atg{f}\) is at most \(2^{|U^+|}\).
\end{theorem}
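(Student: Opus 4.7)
My plan parallels the proof of Theorem~\ref{theorem:regular-pfvs-three} and relies on Theorem~\ref{theorem:no-pos-cycle-unique-att}. The idea is to parametrize attractors of \(\atg{f}\) by Boolean assignments \(\widehat{x} \colon U^{+} \to \mathbb{B}\) and to show that each assignment yields at most one attractor, giving the bound \(2^{|U^{+}|}\).

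First, for each assignment \(\widehat{x}\), I would construct a restricted BN \(f^{\widehat{x}}\) by substituting every occurrence of \(v \in U^{+}\) in the Boolean functions of the remaining variables by the constant \(\widehat{x}(v)\) and freezing the variables in \(U^{+}\) so that no asynchronous update ever changes them. The influence graph \(\ig{f^{\widehat{x}}}\) is then obtained from \(\ig{f}\) by removing all arcs outgoing from \(U^{+}\); hence every cycle of \(\ig{f^{\widehat{x}}}\) avoids \(U^{+}\). Since \(U^{+}\) is a positive feedback vertex set of \(\ig{f}\), all surviving cycles are negative, so Theorem~\ref{theorem:no-pos-cycle-unique-att} yields a unique attractor \(A_{\widehat{x}}\) in \(\atg{f^{\widehat{x}}}\).

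The crux is then to inject the set of attractors of \(\atg{f}\) into \(\mathbb{B}^{U^{+}}\). For each attractor \(A\) of \(\atg{f}\) I would define a canonical projection \(\widehat{x}_A\) and argue that \(A\) is determined by \(A_{\widehat{x}_A}\). One natural route is to show that the \(U^{+}\)-coordinates are in fact constant on every state of \(A\): if some \(v \in U^{+}\) flipped along a trajectory inside \(A\), asynchronous flexibility combined with the absence of positive cycles outside \(U^{+}\) should allow an escape from \(A\) into a different attractor, contradicting subset-minimality of \(A\) as a trap set. A fallback route goes through minimal trap spaces of \(f\): every attractor is contained in a minimal trap space of \(f\), and since only positive cycles can sustain \(\star\)-entries in a subset-minimal trap space while \(U^{+}\) intersects all positive cycles of \(\ig{f}\), the \(U^{+}\)-coordinates of any minimal trap space are fully determined; distinct minimal trap spaces then give distinct \(U^{+}\)-profiles, bounding the number of attractors by \(2^{|U^{+}|}\).

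The main obstacle I expect is precisely this injectivity argument: within a cyclic attractor of \(\atg{f}\) the \(U^{+}\)-variables may a priori flip many times along asynchronous trajectories, so one must genuinely use the positive feedback vertex set property (and not just the graph restriction) to rule out two distinct attractors sharing the same \(U^{+}\)-profile. This combinatorial step will likely require a careful analysis of how asynchronous paths in \(\atg{f}\) interact with the remaining negative cycles of \(\ig{f}\), which is the subtle core of the Paulev\'e--Richard argument.
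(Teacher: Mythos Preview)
The paper does not prove this theorem at all: it is imported verbatim as Theorem~3.5 of Paulev\'e--Richard and used as a black box in the proof of Theorem~\ref{theorem:regular-pfvs-two-tight}. There is therefore nothing in the paper to compare your sketch against.

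That said, since you attempted a reconstruction, two remarks on the sketch itself. Your route~(b) via minimal trap spaces cannot work as stated: the inequality goes the wrong way. The paper (and \cite{DBLP:journals/nc/KlarnerBS15}) only give that the number of minimal trap spaces is a \emph{lower} bound on the number of asynchronous attractors, because each minimal trap space contains at least one attractor; a single minimal trap space may contain several attractors, so bounding minimal trap spaces by \(2^{|U^{+}|}\) would not bound attractors. Your route~(a) is closer in spirit to the actual Paulev\'e--Richard argument, but the specific claim that the \(U^{+}\)-coordinates are constant along every asynchronous attractor is false in general (variables in \(U^{+}\) may well oscillate inside a cyclic attractor), and you correctly flag this as the obstacle. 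The genuine proof does not freeze \(U^{+}\) pointwise on attractors; rather, it shows that for any two states agreeing on \(U^{+}\), one can asynchronously drive them to a common successor using the absence of positive cycles outside \(U^{+}\), and from this derives that distinct attractors must be separated by some \(U^{+}\)-assignment. Your construction of \(f^{\widehat{x}}\) and the appeal to Theorem~\ref{theorem:no-pos-cycle-unique-att} are the right ingredients, but the injection of attractors into \(\mathbb{B}^{U^{+}}\) requires this reachability argument rather than either of the shortcuts you propose.
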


We observed that the bound of \(3^{|U^+|}\) is too rough for many example finite ground programs in the literature.
Then inspired by Theorem~\ref{theorem:att-pfvs-two} for an upper bound for the number of attractors of an asynchronous BN, we obtain an interesting result for tight finite ground programs.

\begin{theorem}\label{theorem:regular-pfvs-two-tight}
	Let \(P\) be a tight finite ground program.
	Let \(U^{+}\) be a positive feedback vertex set of \(\dg{P}\).
	Then the number of regular models of \(P\) is at most \(2^{|U^+|}\).
\end{theorem}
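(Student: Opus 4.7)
The plan is to prove Theorem~\ref{theorem:regular-pfvs-two-tight} by chaining together the characterization of regular models of tight programs as minimal trap spaces of the encoded Boolean network (Lemma~\ref{lemma:tight-lp-regular-min-ts-bn}) with the attractor bound for asynchronous Boolean networks (Theorem~\ref{theorem:att-pfvs-two}). First, let $f$ be the encoded BN of $P$. Since $P$ is tight, Lemma~\ref{lemma:tight-lp-regular-min-ts-bn} gives that the regular models of $P$ are exactly the $\leq_s$-minimal trap spaces of $f$, so it suffices to bound the number of $\leq_s$-minimal trap spaces of $f$ by $2^{|U^+|}$.

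Next I would transfer the hypothesis on $\dg{P}$ to $\ig{f}$: by Theorem~\ref{theorem:ig-dg}, $\ig{f}$ is a subgraph of $\dg{P}$, so every positive cycle of $\ig{f}$ is also a positive cycle of $\dg{P}$. Therefore $U^+$, being a positive feedback vertex set of $\dg{P}$, is also a positive feedback vertex set of $\ig{f}$. Applying Theorem~\ref{theorem:att-pfvs-two} to $f$ with this $U^+$, I obtain that $\atg{f}$ has at most $2^{|U^+|}$ attractors.

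The final step is to pass from attractors of $\atg{f}$ to $\leq_s$-minimal trap spaces of $f$. I would use the standard fact from~\cite{DBLP:journals/nc/KlarnerBS15} that every $\leq_s$-minimal trap space of $f$ contains at least one attractor of $\atg{f}$, together with the observation that two distinct $\leq_s$-minimal trap spaces $m_1$ and $m_2$ cannot share an attractor: their intersection, if non-empty, is itself a sub-space and a trap set, hence a trap space strictly contained in each, contradicting minimality. Thus each attractor is contained in at most one $\leq_s$-minimal trap space, and the map sending a minimal trap space to any of its attractors is injective. Combined with the bound from the previous step, $f$ has at most $2^{|U^+|}$ $\leq_s$-minimal trap spaces, hence $P$ has at most $2^{|U^+|}$ regular models.

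I do not anticipate a serious obstacle: the proof is essentially an orchestration of already-proved results, and the only non-mechanical step is the disjointness argument for minimal trap spaces. If the reader is unwilling to accept that as folklore, a one-line justification based on closure of sub-spaces under non-empty intersection (and of trap sets under intersection) can be inserted, which is the mild technicality to watch out for when writing up the final version.
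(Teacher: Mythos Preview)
Your proposal is correct and follows essentially the same route as the paper's own proof: encode \(P\) as a BN \(f\), use Lemma~\ref{lemma:tight-lp-regular-min-ts-bn} to identify regular models with \(\leq_s\)-minimal trap spaces of \(f\), transfer \(U^+\) to \(\ig{f}\) via Theorem~\ref{theorem:ig-dg}, apply Theorem~\ref{theorem:att-pfvs-two}, and finish with the fact (cited from~\cite{DBLP:journals/nc/KlarnerBS15}) that the number of minimal trap spaces is a lower bound on the number of asynchronous attractors. The only cosmetic difference is that you spell out the injectivity argument explicitly, whereas the paper simply invokes the cited lower-bound statement.
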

\begin{proof}
	Let \(f\) be the encoded BN of \(P\).
	By Lemma~\ref{lemma:tight-lp-regular-min-ts-bn}, regular models of \(P\) coincide with \(\leq_s\)-minimal trap spaces of \(f\).
	By definition, \(U^{+}\) intersects all positive cycles of \(\dg{P}\).
	Since \(\ig{f}\) is a sub-graph of \(\dg{P}\), every positive cycle of \(\ig{f}\) is also a positive cycle of \(\dg{P}\).
	Hence, \(U^{+}\) is also a positive feedback vertex set of \(\ig{f}\).
	By Theorem~\ref{theorem:att-pfvs-two}, the number of attractors of \(\atg{f}\) is at most \(2^{|U^+|}\).
	Since the number of \(\leq_s\)-minimal trap spaces of \(f\) is a lower bound of the number of attractors of \(\atg{f}\)~\cite{DBLP:journals/nc/KlarnerBS15}, the number of regular models of \(P\) is at most \(2^{|U^+|}\).
\end{proof}


\section{Conclusion and perspectives}\label{sec:conclusion}

In this paper, we have shown three main results relating some graphical properties of a finite ground normal logic program to the set of its regular models, namely 1) the presence of negative cycles as a necessary condition for the existence of non-trivial regular models, 2) the absence of positive cycles as a sufficient condition for the unicity of regular models, and 3) two upper bounds on the number of regular models for, respectively generic and tight, finite ground normal logic programs based on the size of positive feedback vertex sets in their dependency graph.
The first two conditions generalize the finite cases of the two existing results obtained by~\cite{DBLP:journals/jcss/YouY94} for well-founded stratification normal logic programs.
Our proofs use an encoding of finite ground normal logic programs by Boolean networks,
the equivalence established between regular models and minimal trap spaces, and some recent results obtained in Boolean network theory.

We believe that the established connection can provide more results for the study of Datalog programs and abstract argumentation, and might also be worth considering for normal logic programs without finiteness assumption on their ground intantiation.
The results presented in this paper use conditions on either positive cycles or negative cycles.
It is thus natural to think that by using both kinds of cycles simultaneously, improved results might be obtained.
Finally, we also conjecture that the upper bound for tight finite ground normal logic programs presented here, is in fact valid for generic ones.

\section*{Acknowledgments}

This work was supported by Institut Carnot STAR, Marseille, France.

\bibliographystyle{eptcs}
\bibliography{ref}
\end{document}